\def\Var{\mathrm{Var}}
\def\argmin{\mathrm{argmin}}
\newtheorem{Def}{Definition}
\newtheorem{Prop}{Proposition}
\newtheorem{Cor}{Corollary}
\newtheorem{Rem}{Remark}
\renewenvironment{proof}{\noindent{\bf Proof.}}{\hfill
  $\square$\par\noindent
  }
\newcommand{\norm}[1]{\ensuremath{\vert\!\vert #1 \vert\!\vert}}
\newcommand{\E}{\ensuremath{\mathbb{E}}}
\renewcommand{\P}{\ensuremath{\mathbb{P}}}
\newcommand{\R}{\ensuremath{\mathbb{R}}}
\newcommand{\N}{\ensuremath{\mathbb{N}}}
\newcommand{\e}{\ensuremath{\varepsilon}}
\newcommand{\p}{\ensuremath{\varphi}}
\newcommand{\bbeta}{\text{\mathversion{bold}{$\beta$}}}
\newcommand{\etab}{\text{\mathversion{bold}{$\eta$}}}
\newcommand{\bDelta}{\text{\mathversion{bold}{$\Delta$}}}
\newcommand{\bY}{{\bf Y}}
\newcommand{\bA}{{\bf A}}
\newcommand{\bD}{{\bf D}}
\newcommand{\bG}{{\bf G}}
\newcommand{\bC}{{\bf C}}
\newcommand{\bI}{{\bf I}}
\newcommand{\bx}{{\bf x}}
\newcommand{\be}{{\bf e}}
\newcommand{\by}{{\bf y}}
\newcommand{\J}{{\mathcal J}}
\newcommand{\zero}{\text{\mathversion{bold}{$0$}}}
\newtheorem{lem}{Lemma}
\newtheorem{thm}{Theorem}
\numberwithin{equation}{section}
\begin{document}
\title{Adaptive Lasso and group-Lasso for functional Poisson regression}
\author{S. Ivanoff$^\ddag$, F. Picard$^\star$ \& V. Rivoirard$^{\ddag}$\\  \\ 
  \small{$^\ddag$CEREMADE UMR CNRS 7534, Universit\'e Paris Dauphine,F-75775 Paris, France} \\
  \small{$^\star$LBBE, UMR CNRS 5558 Univ. Lyon 1, F-69622 Villeurbanne, France}  
}
\maketitle

\section*{Abstract}
High dimensional Poisson regression has become a standard framework for the analysis of massive counts datasets. In this work we estimate the intensity function of the Poisson regression model by using a dictionary approach, which generalizes the classical basis approach, combined with a Lasso or a group-Lasso procedure. Selection depends on penalty weights that need to be calibrated. Standard methodologies developed in the Gaussian framework can not be directly applied to Poisson models due to heteroscedasticity. Here we provide data-driven weights for the Lasso and the group-Lasso derived from concentration inequalities adapted to the Poisson case. We show that the associated Lasso and group-Lasso procedures are theoretically optimal in the oracle approach. Simulations are used to assess the empirical performance of our procedure, and an original application to the analysis of Next Generation Sequencing data is provided.

\section*{Introduction}

Poisson functional regression has become a standard framework for image or spectra analysis, in which case observations are made of $n$ independent couples $(Y_i,X_i)_{i=1,\ldots,n}$, and can be modeled as
\begin{equation}\label{modelpoi} 
Y_i|X_i\sim{\mathcal Poisson}(f_0(X_i)). 
\end{equation}
The $X_i$'s (random or fixed) are supposed to lie in a known compact support of $\R^d$ ($d\geq 1$), say $[0,1]^d$, and the purpose is to estimate the unknown intensity function $f_0$ assumed to be positive. Wavelets have been used extensively for intensity estimation, and the statistical challenge has been to propose thresholding procedures in the spirit of \cite{DJ94}, that were adapted to the variance's spatial variability associated with the Poisson framework. An early method to deal with high dimensional count data has been to apply a variance stabilizing-transform (see \cite{anscombe}) and to treat the transformed data as if they were Gaussian. More recently, the same idea has been applied to the data's decomposition in the Haar-wavelet basis, see \cite{fryznason} and \cite{fryzlewicz}, but these methods rely on asymptotic approximations and tend to show lower performance when the level of counts is low \cite{besbeas}. Dedicated wavelet thresholding methods were developed in the Poisson setting by \cite{kolaczyk} and \cite{sardy}, and a recurrent challenge has been to define an appropriate threshold like the universal threshold for shrinkage and selection, as the heteroscedasticity of the model calls for component-wise thresholding.

In this work we first propose to enrich the standard wavelet approach by considering the so-called dictionary strategy. We assume that $\log f_0$ can be well approximated by a linear combination of $p$ known functions, and we reduce the estimation of $f_0$ to the estimation of $p$ coefficients. Dictionaries can be built from classical orthonormal systems such as wavelets, histograms or the Fourier basis, which results in a framework that encompasses wavelet methods. Considering overcomplete (\textit{ie} redundant) dictionaries is efficient to capture different features in the signal, by using sparse representations (see \cite{chen2001} or \cite{Tropp04}). For example, if $\log f_0$ shows piece-wise constant trends along with some periodicity, combining both Haar and Fourier bases will be more powerful than separate strategies, and the model will be sparse in the coefficients domain. To ensure sparse estimations, we consider the Lasso and the group-Lasso procedures. Group estimators are particularly well adapted to the dictionary framework, especially if we consider dictionaries based on a wavelet system, for which it is well known that coefficients can be grouped scale-wise for instance (see \cite{CC05}). Finally, even if we do not make any assumption on $p$ itself, it may be larger than $n$ and methodologies based on $\ell_1$-penalties, such as the Lasso and the group-Lasso appear appropriate.\\

The statistical properties of the Lasso are particularly well understood in the context of regression with \textit{i.i.d.} errors, or for density estimation for which a range of oracle inequalities have been established. These inequalities, now widespread in the literature, provide theoretical error bounds that hold on events with a controllable (large) probability. See for instance \cite{blpv}, \cite{BRT}, \cite{BTW,bunea2} and the references therein. For generalized linear models, \cite{PH07} studied $\ell_1$-regularization path algorithms and \cite{vdG} established non-asymptotic oracle inequalities. The sign consistency of the Lasso has been studied by \cite{JRY} for a very specific Poisson model. Finally, we also mention than the Lasso has also been extensively considered in survival analysis. See for instance \cite{GG},  \cite{Zou2},  \cite{KN13},  \cite{BFJ},  \cite{Lemler} and  \cite{HRBR}. 

Here we consider not only the Lasso estimator but also its extension, the group-Lasso proposed by \cite{YL}, which is relevant when the set of parameters can be partitioned into groups. The analysis of the group-Lasso has been led in different contexts. For instance, consistency has been studied by \cite{Bach08}, \cite{OWJ} and \cite{WH10}. In the linear model, \cite{NR08} derived conditions ensuring various asymptotic properties such as consistency, oracle properties or persistence. Still for the linear model, \cite{LPvdGT} established oracle inequalities and, in the Gaussian setting, pointed out advantages of the group-Lasso with respect to the Lasso, generalizing the results of \cite{CH08} and \cite{HZ10}. We also mention \cite{mvdgb} who studied the group-Lasso for logistic regression, \cite{BLG14} for generalized linear model with Poisson regression as a special case and \cite{DHMS} for other linear heteroscedastic models.\\

As pointed out by empirical comparative studies \cite{besbeas}, the calibration of any thresholding rule is of central importance. Here we consider Lasso and group-Lasso penalties of the form $$\mbox{pen}(\bbeta)=\sum_{j=1}^p \lambda_j |\beta_j|$$ and $$\mbox{pen}^g(\bbeta) =\sum_{k=1}^K \lambda_k^g \|\bbeta_{G_k}\|_2,$$ where $G_1\cup\cdots\cup G_K$ is a partition of $\{1,\ldots,p\}$ into non-overlapping groups (see Section~\ref{sec_methodology} for more details). By calibration we refer to the definition and to the suitable choice of the weights $ \lambda_j$ and $\lambda_k^g$, which is intricate in heteroscedastic models, especially for the group-Lasso. For functional Poissonian regression, the ideal shape of these weights is unknown, even if for the group-Lasso, the $\lambda_k^g$'s should of course depend on the groups size. As for the Lasso, most proposed weights in the literature are non-random and constant such that the penalty is proportional to $\|\bbeta\|_1$, but when facing variable selection and consistency simultaneously, \cite{Zou} showed the interest in considering non-constant data-driven $\ell_1$-weights even in the simple case where the noise is Gaussian with constant variance. This issue becomes even more critical in Poisson functional regression in which variance shows spatial heterogeneity. As \cite{Zou}, our first contribution is to propose here adaptive procedures with weights depending on the data. Weights $\lambda_j$ for the Lasso are derived by using sharp concentration inequalities, in the same spirit as \cite{blpv}, \cite{GG}, \cite{Lemler} and \cite{HRBR}, but adapted to the Poissonian setting. To account for heteroscedasticity, weights $\lambda_j$ are component-specific and depend on the data (see Theorem~\ref{calibration}). We propose a similar procedure for the calibration of the group-Lasso. In most proposed procedures, the analogs of the $\lambda_k^g$'s are proportional to the $\sqrt{|G_k|}$'s (see \cite{NR08},  \cite{bvdg} or \cite{BLG14}). But to the best of our knowledge, adaptive group-Lasso procedures (with weights depending on the data) have not been proposed yet. This is the purpose of Theorem~\ref{theo_1_plus}, which is the main result of this work, generalizing Theorem~\ref{calibration} by using sharp concentration inequalities for infinitely divisible vectors. We show the shape relevance of the data-driven weights $\lambda_k^g$ by comparing them to the weights proposed by \cite{LPvdGT} in the Gaussian framework. In Theorem~\ref{theo_1_plus}, we do not impose any condition on the groups size. However, whether $|G_k|$ is smaller than $\log p$ or not highly influences the order of magnitude of $\lambda_k^g$.

Our second contribution consists in providing the theoretical validity of our approach by establishing slow and fast oracle inequalities under RE-type conditions in the same spirit as \cite{BRT}. Closeness between our estimates and $f_0$ is measured by using the empirical Kullback-Leibler divergence. We show that classical oracle bounds are achieved. We also show the relevance of considering the group-Lasso instead of the Lasso in some situations. Our results, that are non-asymptotic, are valid under very general conditions on the design $(X_i)_{i=1,\ldots,n}$ and on the dictionary. However, to shed some light on our results, we illustrate some of them in the asymptotic setting with classical dictionaries like wavelets, histograms or Fourier bases. Our approach generalizes the classical basis approach and in particular block wavelet thresholding which is equivalent to group-Lasso in that case (see \cite{YL}). We refer the reader to \cite{CC05} for a deep study of block wavelet thresholding in the context of density estimation whose framework shows some similarities with ours in terms of heteroscedasticity. Note that sharp estimation of variance terms proposed in this work can be viewed as an extension of coarse bounds provided by \cite{CC05}. Finally, we emphasize that our procedure differs from \cite{BLG14}'s one in several aspects: First, in their Poisson regression setting, they do not consider a dictionary approach. Furthermore, their weights are constant and not data-driven, so are strongly different from ours. Finally, rates of \cite{BLG14} are established under much more stronger assumptions than ours (see Section~\ref{ssec_ioLasso} for more details). 

Finally, we explore the empirical properties of our calibration procedures by using simulations. We show that our procedures are very easy to implement, and we compare their performance with variance-stabilizing transforms and cross-validation. The calibrated Lasso and group-Lasso are associated with excellent reconstruction properties, even in the case of low counts. We also propose an original application of functional Poisson regression to the analysis of Next Generation Sequencing data, with the search of peaks in Poisson counts associated with the detection of replication origins in the human genome (see \cite{PCA14}). \\

This article is organized as follows. In Section \ref{sec_methodology}, we introduce the Lasso and group-Lasso procedures we propose in the dictionary approach setting. In Section~\ref{sec_calibration}, we derive data-driven weights of our procedures that are extensively commented. Theoretical performance of our estimates are studied in Section \ref{sec_io} in the oracle approach. In Section \ref{sec_simulation}, we investigate the empirical performance of the proposed estimators using simulated data, and an application is provided on next generation sequencing data in Section \ref{sec_applications}.
\section{Penalized log-likelihood estimates for Poisson regression and dictionary approach}\label{sec_methodology}
We consider the functional Poisson regression model, with $n$ observed counts $Y_i \in \N$ modeled such that:
\begin{equation}
Y_i|X_i\sim{\mathcal Poisson}(f_0(X_i)),
\end{equation}
with the $X_i$'s (random or fixed) supposed to lie in a known compact support, say $[0,1]^d$. Since the goal here is to estimate the  function $f_0$ assumed to be positive on $[0,1]^d,$ a natural candidate is a function $f$ of the form $f=\exp(g)$. Then, we consider the so-called dictionary approach which consists in decomposing $g$ as a linear combination of the elements of a given finite dictionary of functions denoted by $\Upsilon=\{\p_j\}_{j \in \J}$, with $\|\p_j\|_2=1$ for all $j$. Consequently, we choose $g$ of the form: $$g = \sum_{j\in \J} \beta_j \p_{j},$$ with $p=\mbox{card}(\J)$ that may depend on $n$ (as well as the elements of $\Upsilon$). Without loss of generality we will assume in the following  that $\J = \{1,\ldots,p\}$. In this framework, estimating $f_0$ is equivalent to selecting the vector of regression coefficients $\bbeta = (\beta_j)_{j\in \J} \in \R^p$. In the sequel, we write $g_\bbeta = \sum_{j\in \J} \beta_j \p_j$, $f_\bbeta = \exp(g_\bbeta),$ for all $\bbeta \in \R^p$. Note that we do not require the model to be true, that is we do not suppose the existence of $\bbeta_0$ such that $f_0 = f_{\bbeta_0}$.

The strength of the dictionary approach lies in its ability to capture different features of the function to estimate (smoothness, sparsity, periodicity,...) by sparse combinations of elements of the dictionary so that only few coefficients need to be selected, which limits estimation errors. Obviously, the dictionary approach encompasses the classical basis approach consisting in decomposing $g$ on an orthonormal system. The richer the dictionary, the sparser the decomposition, so $p$ can be larger than $n$ and the model becomes high-dimensional.   

We consider a likelihood-based penalized criterion to select $\bbeta$, the coefficients of the dictionary decomposition. We denote by
$\bA$ the $n \times p$-design matrix with $A_{ij} = \p_j(X_i)$, $\bY=(Y_1,\ldots,Y_n)^T$ and the log-likelihood associated with this model is 
$$l(\bbeta)=\sum_{j\in \J}\beta_{j}(\bA^T\bY)_{j}-\sum_{i=1}^n\exp\Bigl(\sum_{j\in \J}\beta_{j}A_{ij}\Bigr)-\sum_{i=1}^n\log (Y_i!),$$
which is a concave function of $\bbeta$. Next sections propose two different ways to penalize $-l(\bbeta)$.

\subsection{The Lasso estimate} 
The first penalty we propose is based on the (weighted) $\ell_1$-norm and we obtain a Lasso-type estimate by considering 
\begin{equation}\label{critLasso}
\widehat\bbeta^L\in\underset{\bbeta \in \R^p}{\argmin}\left\{ - l(\bbeta) + \sum_{j=1}^p \lambda_j |\beta_j| \right\}.
\end{equation}
The penalty term $\sum_{j=1}^p \lambda_j |\beta_j|$ depends on positive weights $(\lambda_j)_{j\in \J}$ that vary according to the elements of the dictionary and are chosen in Section~ \ref{ssec_lambdaj}. This choice of varying weights instead of a unique $\lambda$ stems from heteroscedasticity due to the Poisson regression, and a first part of our work consists in providing theoretical data-driven values for these weights, in the same spirit as \cite{blpv} or \cite{HRBR} for instance.  From the first order optimality conditions (see \cite{bvdg}), $\widehat\bbeta^L$ satisfies
\begin{eqnarray*}
\left\{
\begin{aligned}
\bA_j^T(\bY-\exp(\bA\widehat\bbeta^{L})) &=& \lambda_j \frac{\widehat\beta_{j}^L}{|\widehat\beta_{j}^L|} \quad\, \text{  if } \widehat\beta_j^L\neq 0,\\
|\bA_j^T(\bY-\exp(\bA\widehat\bbeta^{L}))| &\leq& \lambda_j \qquad\quad\text{  if } \widehat\beta_j^L=0,
\end{aligned}
\right.
\end{eqnarray*}
where  $\exp(\bA\bbeta)=\left(\exp((\bA\bbeta)_1),\ldots,\exp((\bA\bbeta)_n)\right)^T$ and $\bA_j$ is the $j$-th column of the matrix $\bA$.  Note that the larger the $\lambda_j$'s, the sparser the
estimates. In particular $\widehat\bbeta^{L}$ belongs to the set of the vectors $\bbeta\in\R^p$ that satisfies for any $j\in \J$,
\begin{equation}\label{cond_dantzig}
|\bA_j^T(\bY-\exp(\bA\bbeta))|\leq \lambda_j.
\end{equation}

The Lasso estimator of $f_0$ is now easily derived.
\begin{Def}
The Lasso estimator of $f_0$ is defined as
$$\widehat f^L(x):=\exp(\widehat g^L(x)):=\exp\Biggl(\sum_{j=1}^p\widehat\beta_j^L\p_j(x)\Biggr).$$
\end{Def}
\noindent We also propose an alternative to $\widehat f^L$ by considering the group-Lasso.

\subsection{The group-Lasso estimate}
We also consider the grouping of coefficients into non-overlapping blocks. Indeed, group estimates may be better adapted than their single counterparts when there is a natural group structure. The procedure keeps or discards all the coefficients within a block and can increase estimation accuracy by using information about coefficients of the same block. In our setting, we partition the set of indices $\J = \{1,\ldots,p\}$ into $K$ non-empty groups: 
$$ \{1,\ldots,p\} = G_1 \cup G_2 \cup \cdots \cup G_K.$$
For any $\bbeta \in \R^p$, $\bbeta_{G_k}$ stands for the sub-vector of $\bbeta$ with elements indexed by the elements of $G_k$, and we define the block $\ell_1$-norm on $\R^p$ by
$$\|\bbeta\|_{1,2} = \sum_{k=1}^K \|\bbeta_{G_k}\|_2. $$
Similarly, $\bA_{G_k}$ is the $n \times |G_k|$ submatrix of $\bA$ whose columns are indexed by the elements of $G_k$. Then the group-Lasso $\widehat\bbeta^{gL}$ is a solution to the following convex optimization problem:
$$\widehat\bbeta^{gL} \in \underset{\bbeta \in \R^p}{\argmin} \Big\{-l(\bbeta) + \sum_{k=1}^K \lambda_k^g \|\bbeta_{G_k}\|_2\Big\},$$
where the $\lambda_k^g$'s are positive weights for which we also provide a theoretical data-driven expression in Section~\ref{ssec_lambdak}. This group-estimator is constructed similarly to the Lasso, with the block $\ell_1$-norm being used instead of the $\ell_1$-norm. In particular, note that if all groups are of size one then we recover the Lasso estimator. Convex analysis states that $\widehat\bbeta^{gL}$ is a solution of the above optimization problem if the $p$-dimensional vector $\zero$ is in the subdifferential of the objective function. Therefore, $\widehat\bbeta^{gL}$ satisfies:
\begin{eqnarray*}
\left\{
\begin{aligned}
\bA_{G_k}^T(\bY-\exp(\bA\widehat\bbeta^{gL})) &= \lambda_k^g \frac{\widehat\bbeta_{G_k}^{gL}}{\|\widehat\bbeta_{G_k}^{gL}\|_2} \quad &\text{  if } \widehat\bbeta_{G_k}^{gL}\neq \zero,\\
\|\bA_{G_k}^T(\bY-\exp(\bA\widehat\bbeta^{gL}))\|_2 &\leq \lambda_k^g \qquad\qquad &\text{  if } \widehat\bbeta_{G_k}^{gL}=\zero.
\end{aligned}
\right.
\end{eqnarray*}
This procedure naturally enhances group-sparsity as analyzed by \cite{YL}, \cite{LPvdGT} and references therein. 

Obviously, $\widehat\bbeta^{gL}$ belongs to the set of the vectors $\bbeta\in\R^p$ that satisfy for any $k \in \{1,\ldots,K\},$
\begin{equation}\label{cond_grp_dantzig}
\|\bA_{G_k}^T(\bY-\exp(\bA\bbeta))\|_2 \leq \lambda_k^g.
\end{equation}
Now, we set
\begin{Def}
The group Lasso estimator of $f_0$ is defined as
$$\widehat f^{gL}(x):=\exp(\widehat g^{gL}(x)):=\exp\Biggl(\sum_{j=1}^p\widehat\beta_j^{gL}\p_j(x)\Biggr).$$
\end{Def}

In the following our results are given conditionally on the $X_i$'s, and $\E$ (resp. $\P$) stands for the expectation (resp. the probability measure) conditionally on $X_1,\ldots,X_n$. In some situations, to give orders of magnitudes of some expressions, we will use the following definition:
\begin{Def}\label{def-reg}
We say that the design $(X_i)_{i=1,\ldots,n}$ is regular if either the design is deterministic and the $X_i$'s are equispaced in $[0,1]$ or the design is random and the $X_i$'s are i.i.d. with density $h$, with
$$0<\inf_{x\in[0,1]^d}h(x)\leq  \sup_{x\in[0,1]^d}h(x)<\infty.$$
\end{Def}

\section{Weights calibration using concentration inequalities}\label{sec_calibration}
 Our first contribution is to derive theoretical data-driven values of the weights  $\lambda_j$'s and $\lambda_k^g$'s, specially adapted to the Poisson model. In the classical Gaussian framework with noise variance $\sigma^2$, weights for the Lasso are chosen to be proportional to  $\sigma\sqrt{\log p}$ (see \cite{BRT} for instance). The Poisson setting is more involved due to heteroscedasticity and such simple tuning procedures cannot be generalized easily. Sections~\ref{ssec_lambdaj} and \ref{ssec_lambdak} give closed forms of parameters $\lambda_j$ and $\lambda_k^g$. They are based on  concentration inequalities specific to the Poisson model. In particular, $\lambda_j$ is used to control the fluctuations of $\bA_j^T\bY$ around its mean, which enhances the key role of $V_j$, a variance term (the analog of $\sigma^2$) defined by
\begin{equation}\label{def_vj}
V_j = \Var (\bA_j^T\bY) = \sum_{i=1}^n f_0(X_i)\p_j^2(X_i).
\end{equation}
\subsection{Data-driven weights for the Lasso procedure}\label{ssec_lambdaj}
For any $j$, we choose a data-driven value for $\lambda_j$ as small as possible so that with high probability, for any $j\in \J$, 
\begin{equation}\label{conc_dantzig}
|\bA_j^T(\bY-\E[\bY])|\leq\lambda_j.
\end{equation}
Such a control is classical for Lasso estimates (see the references above) and is also a key point of the technical arguments of the proofs. Requiring that the weights are as small as possible is justified, from the theoretical point of view, by oracle bounds depending on the $\lambda_j$'s (see Corollaries \ref{cor-simple} and \ref{cor-RE}).  Furthermore, as discussed in \cite{blpv}, choosing theoretical Lasso weights as small as possible is also a suitable guideline for practical purposes.  Finally, note that if the model were true, \textit{i.e.} if there existed a true sparse vector $\bbeta_0$ such that $f_0 = f_{\bbeta_0}$, then $\E[\bY]=\exp(\bA\bbeta_0)$ and $\bbeta_0$ would belong to the set defined by 
\eqref{cond_dantzig} with large probability. The smaller the $\lambda_j$'s, the smaller the set
within selection of $\widehat\bbeta^L$ is performed. So, with a sharp control in \eqref{conc_dantzig}, we increase the probability to select $\bbeta_0$. The following theorem provides the data-driven weights $\lambda_j$'s. The main theoretical ingredient we use to choose the weights $\lambda_j$'s is a concentration inequality for Poisson processes and to proceed, we link the quantity $\bA_j^T\bY$ to a specific Poisson process, as detailed in the proofs Section~\ref{sec:proof1}.
\begin{thm}\label{calibration}
Let $j$ be fixed and $\gamma >0$ be a constant. Define $\widehat V_j = \sum_{i=1}^n \p_j^2(X_i)Y_i$ the natural unbiased estimator of $V_j$ and
 $$\widetilde V_j=\widehat V_j+\sqrt{2\gamma\log p\widehat V_j\max_i \p_j^2(X_i)}+3\gamma\log p\max_i \p_j^2(X_i).$$
Set
\begin{equation}\label{choix_de_lambda}
 \lambda_j=\sqrt{2\gamma\log p\widetilde V_j}+\frac{\gamma\log p}{3}\max_i |\p_j(X_i)|,
\end{equation}
then
\begin{equation}\label{controle_via_lambda}
\P\Big(|\bA_j^T(\bY-\E[\bY])| \geq \lambda_j \Big) \leq \frac{3}{p^\gamma}.
\end{equation}
\end{thm}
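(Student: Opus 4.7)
The plan is to view $\bA_j^T\bY=\sum_{i=1}^n \p_j(X_i)Y_i$ as a weighted sum of independent Poisson variables, equivalently as the integral of $\p_j$ against a Poisson point process whose mean measure is determined by $(X_i,f_0(X_i))_{i=1,\ldots,n}$, and then to apply two Bernstein-type concentration inequalities for such integrals: one to control the fluctuation of $\bA_j^T\bY$ around its mean in terms of the true variance $V_j$, and a second to upper-bound $V_j$ by the data-driven proxy $\widetilde V_j$.

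First, a two-sided Bernstein inequality for linear functionals of Poisson processes, obtained via the Cram\'er transform $\log\E[\exp(\lambda(Y_i-f_0(X_i)))]=f_0(X_i)(e^\lambda-1-\lambda)$ together with the numerical bound $e^x-1-x\leq x^2/(2(1-x_+/3))$, yields
$$\P\!\left(|\bA_j^T(\bY-\E[\bY])|\geq\sqrt{2V_j u}+\tfrac{u}{3}\max_i|\p_j(X_i)|\right)\leq 2e^{-u},$$
since $\mathrm{Var}(\bA_j^T\bY)=V_j$ and the weights have $\ell_\infty$-norm $\max_i|\p_j(X_i)|$. Specializing to $u=\gamma\log p$ almost matches the announced $\lambda_j$, except that the unknown $V_j$ still appears under the square root; this stage costs a failure probability of $2/p^\gamma$.

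Next I upgrade $V_j$ to the empirical quantity $\widetilde V_j$. The estimator $\widehat V_j=\sum_i\p_j^2(X_i)Y_i$ is itself a weighted Poisson sum, unbiased for $V_j$, with variance $\sum_i f_0(X_i)\p_j^4(X_i)\leq \max_i\p_j^2(X_i)\cdot V_j$. Applying a one-sided Bernstein inequality for Poisson functionals (in the spirit of Reynaud-Bouret) to $V_j-\widehat V_j$ gives an implicit bound $V_j\leq\widehat V_j+\sqrt{2u\max_i\p_j^2(X_i)\cdot V_j}$; treating this as a quadratic in $\sqrt{V_j}$ and using $\sqrt{a+b}\leq\sqrt a+\sqrt b$ resolves it into an explicit inequality of shape $V_j\leq\widehat V_j+\sqrt{2u\max_i\p_j^2(X_i)\cdot\widehat V_j}+Cu\max_i\p_j^2(X_i)$; with $u=\gamma\log p$ and the sharp value $C=3$ this is exactly $V_j\leq\widetilde V_j$, on an event of probability at least $1-1/p^\gamma$.

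On the intersection of the three controlled events (total failure $\leq 3/p^\gamma$ by the union bound), the $V_j$ inside the square root in Step 1 can be upgraded to $\widetilde V_j$, yielding $|\bA_j^T(\bY-\E[\bY])|\leq\sqrt{2\gamma\log p\cdot\widetilde V_j}+\tfrac{\gamma\log p}{3}\max_i|\p_j(X_i)|=\lambda_j$, which is the announced conclusion. The delicate point is the second step: obtaining an \emph{empirical} Bernstein bound with $\widehat V_j$, not $V_j$, under the square root while keeping the constants $\sqrt 2$ and $3$. A crude application of $\sqrt{ab}\leq\varepsilon a+b/(4\varepsilon)$ already yields a bound of the correct shape but with loose constants, so recovering the sharp constants of $\widetilde V_j$ is precisely where the specialized Poisson-process concentration inequality is needed.
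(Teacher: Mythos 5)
Your proposal follows essentially the same route as the paper's proof: a Bernstein-type concentration inequality for weighted Poisson sums applied twice --- once to bound $|\bA_j^T(\bY-\E[\bY])|$ by $\sqrt{2uV_j}+\frac{u}{3}\max_i|\p_j(X_i)|$ at cost $2e^{-u}$, and once, applied to $V_j-\widehat V_j$ with the variance bound $\sum_i f_0(X_i)\p_j^4(X_i)\leq \max_i\p_j^2(X_i)\,V_j$, to resolve the resulting quadratic in $\sqrt{V_j}$ and conclude $V_j\leq\widetilde V_j$ at cost $e^{-u}$ --- followed by a union bound with $u=\gamma\log p$, giving $3/p^\gamma$. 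The only cosmetic difference is that the paper justifies the concentration inequality by embedding $(Y_1,\ldots,Y_n)$ into an explicit Poisson process on $[0,1]^d$ and invoking Reynaud-Bouret's inequality, whereas you derive it directly from the Poisson moment generating function; your quadratic-resolution step (with the constant $3$ absorbing the cross terms) is exactly the paper's computation.
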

The first term $\sqrt{2\gamma\log p\widetilde V_j}$ in $\lambda_j$ is the main one, and constitutes a variance term depending on $\widetilde V_j$ that slightly overestimates $V_j$ (see Section~\ref{sec:proof1} for more details about the derivation of $\widetilde V_j$). Its dependence on an estimate of $V_j$ was expected since we aim at controlling fluctuations of $\bA_j^T\bY$ around its mean. The second term comes from the heavy tail of the Poisson distribution, and is the price to pay, in the non-asymptotic setting, for the added complexity of the Poisson framework compared to the Gaussian framework. 

To shed more lights on the form of the proposed weights from the asymptotic point of view, assume that the design is regular (see Definition~\ref{def-reg}). In this case, it is easy to see that under mild assumptions on $f_0$, $V_j$ is asymptotically of order $n$. If we further assume that  
\begin{equation}\label{hyp}
\max_i |\p_j(X_i)|=o(\sqrt{n/\log p}),
\end{equation}
then, when $p$ is large, with high probability, $\widehat V_j$ (and then $\widetilde V_j$) is also of order $n$ (using Remark \ref{order} in the proofs Section~\ref{sec:proof1}), and the second term in $\lambda_j$ is negligible with respect to the first one. In this case, $\lambda_j$ is of order $\sqrt{n\log p}.$ Note that Assumption \eqref{hyp} is quite classical in heteroscedastic settings (see \cite{blpv}). By taking the hyperparameter $\gamma$ larger than 1,  then for large values of $p$, \eqref{conc_dantzig} is true for any $j\in {\mathcal J}$, with large probability.

\subsection{Data-driven weights for the group Lasso procedure}\label{ssec_lambdak}
 Current group-Lasso procedures are tuned by choosing the analog of $\lambda_k^g$ proportional to $\sqrt{|G_k|}$ (see \cite{NR08}, Chapter 4 of \cite{bvdg} or \cite{BLG14}). A more refined version of tuning group-Lasso is provided by \cite{LPvdGT} in the Gaussian setting (see below for a detailed discussion). To the best of our knowledge, data-driven weights (with theoretical validation) for the group-Lasso have not been proposed yet. It is the purpose of Theorem~\ref{theo_1_plus}. Similarly to the previous section, we propose data-driven theoretical derivations for the weights $\lambda_k^g$'s that are chosen as small as possible, but satisfying for any $k\in\{1,\ldots,K\}$,
\begin{equation}\label{conc_gdant}
\|\bA_{G_k}^T(\bY-\E[\bY])\|_2\leq \lambda_k^g
\end{equation}
with high probability (see \eqref{cond_grp_dantzig}). Choosing the smallest possible weights is also recommended by \cite{LPvdGT} in the Gaussian setting (see in their Section~3 the discussion about weights and comparisons with coarser weights of \cite{NR08}). Obviously, $\lambda_k^g$ should depend on sharp estimates of the variance parameters $(V_j)_{j\in G_k}$. The following theorem is the equivalent of Theorem~\ref{calibration} for the group-Lasso. Relying on specific concentration inequalities established for infinitely divisible vectors by \cite{HMRB}, it requires a known upper bound for $f_0$, which can be chosen as $\underset{i}{\max} \,Y_i$ in practice. 
\begin{thm}\label{theo_1_plus}
  Let $k\in\{1,\ldots,K\}$ be fixed and $\gamma >0$ be a constant. Assume that there exists $M>0$ such that for any $x$, $|f_0(x)|\leq M$. 
  Let 
  \begin{equation}\label{require_c}
    c_k=\sup_{\bx \in \R^n}\frac{\|\bA_{G_k}\bA_{G_k}^T\bx\|_2}{\|\bA_{G_k}^T\bx\|_2}.
  \end{equation}
  For all $j \in G_k$, still with $\widehat V_j = \sum_{i=1}^n \p_j^2(X_i)Y_i,$ define
  \begin{equation}\label{def_vj_2}
    \widetilde V_j^g=\widehat V_j+\sqrt{2(\gamma\log p + \log |G_k|)\widehat V_j\max_i \p_j^2(X_i)}+3(\gamma\log p + \log |G_k|)\max_i \p_j^2(X_i).
  \end{equation}
  Let $\gamma>0$ be fixed. Define ${b_k^i} = \sqrt{\sum_{j\in G_k} \p_j^2(X_i)}$ and $b_k = \underset{i}{\max}\, {b_k^i}$. Finally, we set
  \begin{equation}\label{def_lambda_k^g}
    \lambda_k^g = \left(1+\frac{1}{2\sqrt{2\gamma\log p}}\right)\sqrt{\sum_{j\in G_k} \widetilde V_j^g} + 2 \sqrt{\gamma\log p\, D_k},
  \end{equation}
  where $D_k = 8Mc_k^2 + 16b_k^2\gamma\log p$. Then,
  \begin{equation}\label{theoreme_plus}
    \P\Bigg(\|\bA_{G_k}^T(\bY-\E[\bY])\|_2 \geq \lambda_k^g \Bigg) \leq \frac{2}{p^\gamma}.
  \end{equation}
\end{thm}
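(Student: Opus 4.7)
The plan is to combine a Talagrand-type concentration inequality for linear functionals of Poisson measures (as in \cite{HMRB}) with a Bernstein-based replacement of the unknown variances $V_j$ by $\widetilde V_j^g$, mimicking the argument of Theorem~\ref{calibration}. I begin by rewriting
$$\|\bA_{G_k}^T(\bY-\E[\bY])\|_2 = \sup_{\|u\|_2\le 1}\;\sum_{i=1}^n (\bA_{G_k}u)_i\,(Y_i-\E[Y_i]),$$
which exhibits the quantity of interest as the supremum, over the Euclidean unit ball of $\R^{|G_k|}$, of a linear functional of the centered infinitely divisible vector $\bY-\E[\bY]$.

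Applying the HMRB inequality to this supremum yields, for every $t>0$, a Bernstein-type bound of the form
$$\P\bigl(\|\bA_{G_k}^T(\bY-\E[\bY])\|_2 \ge \E\|\bA_{G_k}^T(\bY-\E[\bY])\|_2 + \sqrt{2 v t} + \kappa\,b\,t\bigr) \le e^{-t},$$
where $v = \sup_{\|u\|_2\le 1}\sum_i(\bA_{G_k}u)_i^2 f_0(X_i)$ is the weak variance and $b = \sup_{i,\|u\|_2\le 1}|(\bA_{G_k}u)_i|$ is the maximum jump size. Using $f_0\le M$ gives $v\le M\sup_{\|u\|_2\le 1}\|\bA_{G_k}u\|_2^2$, and a short SVD computation identifies this supremum with the top eigenvalue of $\bA_{G_k}\bA_{G_k}^T$, which equals $c_k^2$ by \eqref{require_c}; Cauchy--Schwarz then gives $b = \max_i\sqrt{\sum_{j\in G_k}\p_j^2(X_i)} = b_k$. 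The expectation itself is controlled by Jensen,
$$\E\|\bA_{G_k}^T(\bY-\E[\bY])\|_2 \le \sqrt{\E\|\bA_{G_k}^T(\bY-\E[\bY])\|_2^2} = \sqrt{\sum_{j\in G_k}V_j}.$$

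It remains to replace the unobserved $V_j$'s by $\widetilde V_j^g$. For each $j\in G_k$ I apply the Bernstein step from the proof of Theorem~\ref{calibration} to the Poisson variable $\widehat V_j$ with deviation level $s=\gamma\log p+\log|G_k|$, obtaining $\P(V_j>\widetilde V_j^g)\le e^{-s}$; a union bound over $G_k$ yields $\sum_{j\in G_k}V_j\le\sum_{j\in G_k}\widetilde V_j^g$ outside an event of probability $|G_k|\,e^{-s}=p^{-\gamma}$. Taking $t=\gamma\log p$ in the HMRB bound and intersecting both events, the deviation $\sqrt{2Mc_k^2\gamma\log p}+\kappa b_k\gamma\log p$ is consolidated, via $\sqrt{a}+\sqrt{b}\le\sqrt{2(a+b)}$, into $2\sqrt{\gamma\log p\cdot D_k}$, and the small prefactor $1+1/(2\sqrt{2\gamma\log p})$ in \eqref{def_lambda_k^g} emerges from an AM--GM adjustment absorbing a residual cross term between the expectation bound $\sqrt{\sum V_j}$ and a fraction of the Gaussian-like variance contribution.

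The principal obstacle will be locating and instantiating the correct vector-valued Bernstein--Talagrand inequality for infinitely divisible vectors from \cite{HMRB}, and verifying that its weak variance and jump parameters specialise cleanly in our Poisson dictionary setting to $Mc_k^2$ and $b_k$. The remaining ingredients---Jensen on the expectation, the SVD identification of $c_k$ with $\sigma_{\max}(\bA_{G_k})$, the Bernstein control of $\widehat V_j$, and the accounting of universal constants to reproduce exactly the form of \eqref{def_lambda_k^g}---closely parallel the proof of Theorem~\ref{calibration} and amount to routine, if tedious, bookkeeping.
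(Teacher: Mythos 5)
Your overall architecture---concentration of $\|\bA_{G_k}^T(\bY-\E[\bY])\|_2$ around its mean, Jensen's inequality to bound that mean by $\sqrt{\sum_{j\in G_k}V_j}$, then a Bernstein-plus-union-bound replacement of $V_j$ by $\widetilde V_j^g$ at level $\gamma\log p+\log|G_k|$---matches the paper, and the second half of your argument is essentially identical to the paper's and correct. The genuine gap sits exactly at the step you flag as the ``principal obstacle'': the clean bound $\P\big(Z\geq \E Z+\sqrt{2vt}+\kappa b_k t\big)\leq e^{-t}$ with weak variance $v\leq Mc_k^2$ does \emph{not} exist off-the-shelf in \cite{HMRB}, and producing a substitute for it is the heart of the proof, not bookkeeping. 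The tool the paper uses (Corollary 1 of \cite{HMRB}) controls a function $f$ of the infinitely divisible vector through its coordinate increments, and the variance-type quantity it produces is of the form $\sup_{\by}\sum_i f_0(X_i)\,|f(\by+\be_i)-f(\by)|^2\,(e^{tb_k^i}-1)/b_k^i$, with the supremum over $\by$ sitting \emph{outside} the sum. For $f(\by)=\|\bA_{G_k}^T\by\|_2$ (equivalently, for your dual supremum over the unit ball) the increments at $\by=\zero$ are exactly $b_k^i$, so this quantity is at least $t\sum_i f_0(X_i)(b_k^i)^2=t\sum_{j\in G_k}V_j$, not $t\,Mc_k^2$. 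A direct application of HMRB therefore yields a deviation term of order $\sqrt{\gamma\log p\,\sum_{j\in G_k}V_j}$, i.e.\ precisely the suboptimal ``naive'' bound, inflated by $\sqrt{\log p}$, that the paper's discussion of the naive procedure warns against; it does not give the weights \eqref{def_lambda_k^g}.

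The missing idea is the truncation device: apply the HMRB corollary not to the norm itself but to $f(\by)=\big(\|\bA_{G_k}^T\by\|_2-E\big)_+$ with $E=\e\sqrt{\sum_{j\in G_k}V_j}$. Away from the truncation region the increments behave like directional derivatives, $|f(\by+\be_i)-f(\by)|\leq 2\,|\langle\be_i,\bA_{G_k}\bA_{G_k}^T\by\rangle|/\|\bA_{G_k}^T\by\|_2+(b_k^i)^2/E$, and only then does $Mc_k^2$ appear through the definition \eqref{require_c} of $c_k$, at the price of a correction $2b_k^2/\e^2$ in the variance proxy; the choice $\e=1/(2\sqrt{2\gamma\log p})$ is what simultaneously generates the prefactor $1+1/(2\sqrt{2\gamma\log p})$ (from the shift $E$) and the $16b_k^2\gamma\log p$ term inside $D_k$. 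Your alternative accounting---absorbing a Bousquet-type cross term $2b_k\E Z$ by AM--GM---cannot rescue the stated constants either: keeping the prefactor at $1+1/(2\sqrt{2\gamma\log p})$ forces a residual term of order $b_k(\gamma\log p)^{3/2}$, which exceeds the $8b_k\gamma\log p$ allowance contained in $2\sqrt{\gamma\log p\,D_k}$ (and Bousquet's inequality requires bounded summands in any case, which Poisson variables are not). So, as written, the proposal proves at best a weaker statement with $\sqrt{\log p}$-inflated weights; to obtain \eqref{theoreme_plus} with \eqref{def_lambda_k^g} you need the truncated-norm argument.
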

Similarly to the weights $\lambda_j$'s of the Lasso, each weight $\lambda_k^g$ is the sum of two terms. The term $\widetilde V_j^g$ is an estimate of $V_j$ so it plays the same role as $\widetilde V_j$. In particular,  $\widetilde V_j^g$ and $\widetilde V_j$ are of the same order since $\log |G_k|$ is not larger than $\log p$. The first term in $\lambda_k^g$ is a variance term, and the leading constant $1+1/(2\sqrt{2\gamma\log p})$ is close to 1 when $p$ is large. So, the first term is close to the square root of the sum of sharp estimates of the $(V_j)_{j\in G_k}$, as expected for a grouping strategy (see \cite{CC05}).

The second term, namely  $2 \sqrt{\gamma\log p\, D_k}$, is more involved. To shed light on it, since $b_k$ and $c_k$ play a key role, we first state the following proposition controlling values of these terms.
\begin{Prop}\label{prop}
Let $k$ be fixed. We have
\begin{equation}\label{c_inf_b}
b_k\leq c_k\leq \sqrt{n}b_k.
\end{equation}
Furthermore,
 \begin{equation}\label{majock}
 c_k^2 \leq \max_{j \in G_k} \sum_{j' \in G_k} \Big|\sum_{l=1}^n \p_j(X_l)\p_{j'}(X_l)\Big|.
 \end{equation}
\end{Prop}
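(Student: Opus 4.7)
The plan is to first recognize that $c_k$ is simply the spectral (operator) norm of $\bA_{G_k}$. Writing $\bu = \bA_{G_k}^T \bx$, the ratio inside the supremum becomes $\|\bA_{G_k}\bu\|_2 / \|\bu\|_2$, with $\bu$ ranging over the range of $\bA_{G_k}^T$, i.e., the orthogonal complement of $\mathrm{Ker}(\bA_{G_k})$. Since the leading right singular vector of $\bA_{G_k}$ is automatically orthogonal to $\mathrm{Ker}(\bA_{G_k})$, the restriction is harmless, so
$$c_k = \|\bA_{G_k}\|_{op} = \sqrt{\lambda_{\max}(\bA_{G_k}^T\bA_{G_k})}.$$
Once this identification is made, everything else is standard linear algebra.

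For the lower bound $b_k \le c_k$ in \eqref{c_inf_b}, I would test the operator norm against standard basis vectors of $\R^n$: for each $i$,
$$c_k^2 = \|\bA_{G_k}^T\|_{op}^2 \ge \|\bA_{G_k}^T \be_i\|_2^2 = \sum_{j \in G_k} \p_j^2(X_i) = (b_k^i)^2,$$
and maximizing over $i$ gives $c_k \ge b_k$. For the upper bound $c_k \le \sqrt{n}\,b_k$, I would bound the operator norm by the Frobenius norm:
$$c_k^2 \le \|\bA_{G_k}\|_F^2 = \sum_{i=1}^n \sum_{j\in G_k}\p_j^2(X_i) = \sum_{i=1}^n (b_k^i)^2 \le n\,b_k^2.$$

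For \eqref{majock}, I would use the identification $c_k^2 = \|\bA_{G_k}^T\bA_{G_k}\|_{op}$ and recall that for any symmetric matrix $B$ the operator norm is bounded by the maximum absolute row sum: $\|B\|_{op} \le \max_j \sum_{j'}|B_{jj'}|$ (this follows from $\|B\|_{op}^2 \le \|B\|_{1\to 1}\|B\|_{\infty\to\infty}$ together with $\|B\|_{1\to 1} = \|B\|_{\infty\to\infty}$ by symmetry, or alternatively from Gershgorin's theorem). Applied to $B = \bA_{G_k}^T\bA_{G_k}$, whose entries are $B_{jj'} = \sum_{l=1}^n \p_j(X_l)\p_{j'}(X_l)$, this yields exactly the claimed inequality. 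There is no real obstacle here; the only subtle point is the initial identification of $c_k$ as a spectral norm (in particular justifying that the supremum over $\bu \in \mathrm{Range}(\bA_{G_k}^T)$ coincides with the full operator norm), and after that the three bounds are short applications of the basis-vector lower bound, the Frobenius upper bound, and the row-sum bound for symmetric matrices.
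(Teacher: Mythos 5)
Your proof is correct, but it follows a genuinely different, more abstract route than the paper's. Your key move --- identifying $c_k$ with the spectral norm $\|\bA_{G_k}\|_{\mathrm{op}} = \lambda_{\max}^{1/2}(\bA_{G_k}^T\bA_{G_k})$, justified by noting that the leading right singular vector lies in $\mathrm{Range}(\bA_{G_k}^T) = \mathrm{Ker}(\bA_{G_k})^\perp$ --- is never made in the paper; indeed, in its comparison with the Gaussian weights the paper only records the one-sided inequality $c_k^2 \le |||\bA_{G_k}^T\bA_{G_k}|||$, so your equality is a (correct) sharpening. The paper instead works directly from the definition of $c_k$: for $b_k \le c_k$ it tests the defining inequality at the basis vectors $\be_i$, writing $(b_k^i)^2 = \langle \be_i, \bA_{G_k}\bA_{G_k}^T\be_i\rangle \le \|\bA_{G_k}\bA_{G_k}^T\be_i\|_2 \le c_k\,\|\bA_{G_k}^T\be_i\|_2 = c_k b_k^i$ (you use the same test vectors but conclude via $\|\bA_{G_k}^T\be_i\|_2 \le \|\bA_{G_k}^T\|_{\mathrm{op}}$); for $c_k \le \sqrt{n}\,b_k$ it applies the Cauchy--Schwarz inequality to the explicit expansion of $\|\bA_{G_k}\bA_{G_k}^T\bx\|_2^2$, which is precisely a hands-on proof of your Frobenius bound $c_k^2 \le \sum_{l}(b_k^l)^2 \le nb_k^2$; and for \eqref{majock} it expands the quadratic form in the variables $K_j = \sum_{i}\p_j(X_i)x_i$ and symmetrizes via $|K_jK_{j'}| \le \frac{1}{2}(K_j^2+K_{j'}^2)$, which is exactly a from-scratch proof of the row-sum (Schur/Gershgorin) bound you invoke for the symmetric matrix $\bA_{G_k}^T\bA_{G_k}$. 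So the three estimates are mathematically parallel; what your version buys is brevity and modularity, since everything reduces to standard facts about spectral, Frobenius, and maximum-row-sum norms, at the cost of having to justify the one subtle point --- that restricting the supremum to $\mathrm{Range}(\bA_{G_k}^T)$ loses nothing --- which you handle correctly, and which the paper's elementary computations sidestep entirely by never leaving the definition of $c_k$.
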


The first inequality of Proposition~\ref{prop} shows that  $2 \sqrt{\gamma\log p\, D_k}$ is smaller than $c_k\sqrt{\log p}+b_k\log p\leq 2c_k\log p$ up to a constant depending on $\gamma$ and $M$. At first glance, the second inequality of Proposition~\ref{prop} shows that $c_k$ is controlled by the coherence of the dictionary (see \cite{Tropp04}) and $b_k$ depends on $(\max_i|\p_j(X_i)|)_{j\in G_k}$. In particular, if for a given block $G_k$, the functions $(\p_j)_{j\in G_k}$ are orthonormal, then for fixed $j\not=j'$,  if the $X_i$'s are deterministic and equispaced on $[0,1]$ or if the $X_i$'s are i.i.d. with a uniform density on $[0,1]^d$, then, when $n$ is large
$$\frac{1}{n}\sum_{l=1}^n \p_j(X_l)\p_{j'}(X_l)\approx \int \p_j(x)\p_{j'}(x)dx=0$$
and we expect
$$c_k^2\lesssim \max_{j \in G_k}\sum_{l=1}^n \p_j^2(X_l).$$
In any case, by using the Cauchy-Schwarz Inequality,   Condition \eqref{majock} gives
\begin{equation}\label{majobrute}
c_k^2 \leq \max_{j \in G_k} \sum_{j' \in G_k} \left(\sum_{l=1}^n \p_j^2(X_l)\right)^{1/2} \left(\sum_{l=1}^n\p_{j'}^2(X_l)\right)^{1/2}.
\end{equation}
To further discuss orders of magnitude for the $c_k$'s, we consider the following condition
\begin{equation}\label{asyn}
 \max_{j \in G_k}\sum_{l=1}^n \p_j^2(X_l)=O(n),
\end{equation}
which is satisfied for instance for fixed $k$ if the design is regular, since $\|\p_j\|_2=1$. Under Assumption \eqref{asyn}, Inequality \eqref{majobrute} gives $$c_k^2=O(|G_k|n).$$ We can say more on $b_k$ and $c_k$ (and then on the order of magnitude of $\lambda_k^g$) by considering classical dictionaries of the literature to build the blocks $G_k$, which is of course realized in practice. In the subsequent discussions, the balance between $|G_k|$ and $\log p$ plays a key role. Note also that $\log p$ is the group size often recommended in the classical setting ($p=n$) for block thresholding (see Theorem 1 of \cite{CC05}).

\subsubsection{Order of magnitude of $\lambda_k^g$ by considering classical dictionaries.}\label{sec:order}
Let $G_k$ be a given block and assume that it is built by using only one of the subsequent systems. For each example, we discuss the order of magnitude of the term $D_k= 8Mc_k^2 + 16b_k^2\gamma\log p$. For ease of exposition, we assume that $f_0$ is supported by $[0,1]$ but we could easily generalize the following discussion to the multidimensional setting.  

\paragraph{Bounded dictionary.} Similarly to \cite{BLG14}, we assume that there exists a constant $L$ not depending on $n$ and $p$ such that for any $j\in G_k$, $\|\p_j\|_\infty\leq L.$ For instance, atoms of the Fourier basis satisfy this property. We then have $$b_k^2\leq L^2|G_k|.$$ Finally, under Assumption \eqref{asyn},
\begin{equation}\label{Dk-Four}
D_k=O(|G_k|n+|G_k|\log p).
\end{equation}

\paragraph{Compactly supported wavelets.} Consider the one-dimensional Haar dictionary: For $j=(j_1,k_1)\in{\mathbb Z}^2$ we set $\p_j(x)=2^{j_1/2}\psi(2^{j_1}x-k_1),\quad \psi(x)=1_{[0,0.5]}(x)-1_{]0.5,1]}(x).$ Assume that the block $G_k$ depends on only one resolution level $j_1$: $G_k=\{j=(j_1,k_1): \quad k_1\in B_{j_1}\},$  where $B_{j_1}$ is a subset of $\{0,1,\ldots, 2^{j_1}-1\}$. In this case, since for $j,j'\in G_k$ with $j\not=j'$, for any $x$, $\p_j(x)\p_{j'}(x)=0$, $$b_k^2=\max_i\sum_{j\in G_k} \p_j^2(X_i)=\max_{i,j\in G_k}\p_j^2(X_i)= 2^{j_1}$$ and Inequality \eqref{majock} gives $$c_k^2\leq \max_{j \in G_k} \sum_{l=1}^n \p_j^2(X_l).$$ If, similarly to Condition \eqref{hyp}, we assume that $\max_{i,j\in G_k} |\p_j(X_i)|=o(\sqrt{n/\log p}),$ then $$b_k^2=o(n/\log p),$$ and under Assumption \eqref{asyn}, $$D_k=O(n),$$ which improves \eqref{Dk-Four}. This property can be easily extended to general compactly supported wavelets $\psi$, since, in this case, for any $j=(j_1,k_1)$$$S_j=\left\{j'=(j_1,k'_1):\quad k'_1\in{\mathbb Z}, \ \p_j\times\p_{j'}\not\equiv 0\right\}$$ is finite with cardinal only depending on the support of $\psi$.

\paragraph{Regular histograms.} Consider a regular grid of the interval $[0,1]$, $\{0,\delta,2\delta,\ldots\}$ with $\delta>0$. Consider then $(\p_j)_{j\in G_k}$ such that for any $j\in G_k$, there exists $\ell$ such that $\p_j=\delta^{-1/2}1_{(\delta(\ell-1), \delta\ell]}.$ We have $\|\p_j\|_2=1$ and $\|\p_j\|_\infty=\delta^{-1/2}.$ As for the wavelet case, for $j,j'\in G_k$ with $j\not=j'$, for any $x$, $\p_j(x)\p_{j'}(x)=0$, then  $$b_k^2=\max_i\sum_{j\in G_k} \p_j^2(X_i)=\max_{i,j\in G_k}\p_j^2(X_i)=\delta^{-1}.$$ If, similarly to Condition \eqref{hyp}, we assume that $\max_{i,j\in G_k} |\p_j(X_i)|=o(\sqrt{n/\log p}),$ then  $$b_k^2=o(n/\log p),$$ and under Assumption~\eqref{asyn}, $$D_k=O(n).$$

The previous discussion shows that we can exhibit dictionaries such that $c_k^2$ and $D_k$ are of order $n$ and the term $b_k^2\log p$ is negligible with respect to $c_k^2$. Then, if similarly to Section~\ref{ssec_lambdaj}, the terms $(\widetilde V_j^g)_{j\in G_k}$ are all of order $n$, $\lambda_k^g$ is of order $\sqrt{n\times\max(\log p; |G_k|)}$ and the main term in $\lambda_k^g$ is the first one as soon as $|G_k|\geq \log p$. In this case, $\lambda_k^g$ is of order $\sqrt{|G_k|n}$.

\subsubsection{Comparison with the Gaussian framework.}
Now, let us compare the $\lambda_k^g$'s to the weights proposed by \cite{LPvdGT} in the Gaussian framework. Adapting their notations to ours, \cite{LPvdGT} estimate the vector $\bbeta_0$ in the model $\bY \sim {\mathcal N}(\bA\bbeta_0, \sigma^2 \bI_n)$ by using the group-Lasso estimate with weights equal to $$\widetilde{\lambda}_k^{g}=2\sqrt{\sigma^2\Big(Tr(\bA_{G_k}^T\bA_{G_k}) + 2|||\bA_{G_k}^T\bA_{G_k}|||(2\gamma\log p + \sqrt{|G_k|\gamma\log p})\Big),}$$ where $|||\bA_{G_k}^T\bA_{G_k}|||$ denotes the maximal eigenvalue of $\bA_{G_k}^T\bA_{G_k}$ (see (3.1) in \cite{LPvdGT}). So, if $|G_k|\leq \log p$, the above expression is of the same order as
\begin{equation}\label{majoSacha}
\sqrt{\sigma^2Tr(\bA_{G_k}^T\bA_{G_k}) }+\sqrt{\sigma^2|||\bA_{G_k}^T\bA_{G_k}|||\gamma\log p}.
\end{equation}
Neglecting the term $16b_k^2\gamma\log p$ in the definition of $D_k$ (see the discussion in Section~\ref{sec:order}), we observe that $\lambda_k^g$ is of the same order as
\begin{equation}\label{contr�le poids}
 \sqrt{\sum_{j\in G_k}  \widetilde V_j^g} + \sqrt{Mc_k^2\gamma \log p}.
\end{equation}
Since $M$ is an upper bound of $\Var(Y_i)=f_0(X_i)$ for any $i$, strong similarities can be highlighted between the forms of the weights in the Poisson and Gaussian settings:
\begin{itemize}
\item[-] For the first terms, $ \widetilde V_j^g$ is an estimate of $V_j$  and
$$\sum_{j\in G_k}V_j\leq M\sum_{j\in G_k}\sum_{i=1}^n \p_j^2(X_i)=M\times Tr(\bA_{G_k}^T\bA_{G_k}).$$ 
\item[-] For the second terms, in view of \eqref{require_c},  $c_k^2$ is related to $|||\bA_{G_k}^T\bA_{G_k}|||$ since we have
$$
c_k^2=\sup_{\bx\in\R^n} \frac{\|\bA_{G_k}\bA_{G_k}^T\bx\|_2^2}{\|\bA_{G_k}^T\bx\|_2^2} \leq \sup_{\by\in\R^{|G_k|}} \frac{\|\bA_{G_k}\by\|_2^2}{\|\by\|_2^2}= |||\bA_{G_k}^T\bA_{G_k}|||.$$
\end{itemize}
These strong similarities between the Gaussian and the Poissonian settings strongly support the shape relevance of the weights we propose.

\subsubsection{Suboptimality of the naive procedure}
Finally, we show that the naive procedure that considers $ \sqrt{\sum_{j\in G_k} \lambda_j^2}$ instead of  $\lambda_k^g$ is suboptimal even if, obviously due to Theorem~\ref{calibration}, with high probability,
$$ \|\bA_{G_k}^T(\bY-\E[\bY])\|_2\leq \sqrt{\sum_{j\in G_k} \lambda_j^2}.$$ Suboptimality is justified by following heuristic arguments. Assume that for all $j$ and $k$, the first terms in \eqref{choix_de_lambda} and \eqref{def_lambda_k^g} are the main ones and $\widetilde V_j\approx \widetilde V_j^g\approx V_j$. Then by considering  $\lambda_k^g$ instead of $\sqrt{\sum_{j\in G_k} \lambda_j^2}$, we improve our weights by the factor $\sqrt{\log p}$, since in this situation, $$\lambda_k^g \approx \sqrt{\sum_{j\in G_k}V_j}$$ and $$\sqrt{\sum_{j \in G_k} \lambda_j^2 } \approx \sqrt{\log p \sum_{j \in G_k}V_j}\approx \sqrt{\log p}\,\lambda_k^g.$$ Remember that our previous discussion shows the importance to consider weights as small as possible as soon as \eqref{conc_gdant} is satisfied with high probability. The next section will confirm this point.

\section{Oracle inequalities}\label{sec_io}
In this section, we establish oracle inequalities to study theoretical properties of our estimation procedures. The $X_i$'s are still assumption-free,  and the performance of our procedures will be only evaluated at the $X_i'$s. To measure the closeness between $f_0$ and an estimate, we use the empirical Kullback-Leibler divergence associated with our model, denoted by $K(\cdot,\cdot)$. Straightforward computations (see for instance \cite{LL06}) show that for any positive function $f$, 
\begin{eqnarray*}
K(f_0,f) &=& \E\left[\log\left(\frac{{\mathcal L}(f_0)}{{\mathcal L}(f)}\right)\right]\\
&=&\sum_{i=1}^n\left[(f_0(X_i)\log f_0(X_i)-f_0(X_i))\right]- \left[(f_0(X_i)\log f(X_i)-f(X_i))\right],
\end{eqnarray*}
where ${\mathcal L}(f)$ is the likelihood associated with $f$. We speak about {\it empirical divergence} to emphasize its dependence on the $X_i$'s. Note that we can write 
\begin{equation}\label{k_pos}
K(f_0,f) = \sum_{i=1}^n f_0(X_i) (e^{u_i}-u_i-1), 
\end{equation}
where $u_i = \log {f(X_i) \over f_0(X_i)}.$ This expression clearly shows that $K(f_0,f)$ is non-negative and $K(f_0,f)=0$ if and only if for all $i\in\{1,\ldots,n\}$, we have $u_i = 0$, that is $f(X_i) = f_0(X_i)$ for all $i\in\{1,\ldots,n\}$. 

\begin{Rem}
To weaken the dependence on $n$ in the asymptotic setting, an alternative, not considered here, would consist in considering $n^{-1}K(\cdot,\cdot)$ instead of  $K(\cdot,\cdot)$.
\end{Rem}
If the classical ${\mathbb L}_2$-norm is the natural loss-function for penalized least squares criteria, the empirical Kullback-Leibler divergence is a natural alternative for penalized likelihood criteria. In next sections, oracle inequalities will be expressed by using $K(\cdot,\cdot).$

\subsection{Oracle inequalities for the group-Lasso estimate}\label{ssec_ioLasso}
In this section, we state oracle inequalities for the group-Lasso. These results can be viewed as generalizations of results by \cite{LPvdGT} to the case of the Poisson regression model. They will be established on the set $\Omega_g$ where 
\begin{equation}\label{def_omega}
\Omega_g = \Big\{\|\bA_{G_k}^T(\bY-\E[\bY])\|_2 \leq \lambda_k^g\quad\forall\, k\in \{1,\ldots,K\}\Big\}.
\end{equation}
Under assumptions of Theorem \ref{theo_1_plus}, we have $\P(\Omega_g) \geq 1-{2K\over p^\gamma}\geq 1-2p^{1-\gamma}$. By considering $\gamma>1$, we have that $\P(\Omega_g)$ goes to 1 at a polynomial rate of convergence when $p$ goes to $+\infty$. For any $\bbeta\in\R^p$, we denote by $$f_\bbeta(x)=\exp\Biggl(\sum_{j=1}^p\beta_j\p_j(x)\Biggr),$$ the candidate associated with $\bbeta$ to estimate $f_0$. We first give a {\it slow oracle inequality} (see for instance \cite{BTW}, \cite{GG} or \cite{LPvdGT}) that does not require any assumption.
\begin{thm}\label{slow_io_gl}
On $\Omega_g$, 
\begin{equation}
\label{slow_oi_gLasso}
K(f_0,\widehat f^{gL}) \leq \inf_{\bbeta \in \R^p} \Big\{K(f_0, f_\beta) + 2 \sum_{k=1}^K \lambda_k^g\|\bbeta_{G_k}\|_2\Big\}.
\end{equation}
\end{thm}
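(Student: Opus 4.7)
The plan is to follow the standard slow-oracle-inequality template, pivoting on the fact that the minimized criterion $-l(\bbeta)+\sum_k \lambda_k^g\|\bbeta_{G_k}\|_2$ agrees with the Kullback functional up to a mean-zero linear term that is exactly what the event $\Omega_g$ is designed to control block by block.

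First, I would rewrite $-l(\bbeta)$ as a deterministic ``risk'' plus a centered noise term. Explicitly, setting $\ell(\bbeta)=\sum_{i=1}^n\bigl[\exp((\bA\bbeta)_i)-f_0(X_i)(\bA\bbeta)_i\bigr]$, a direct computation shows $K(f_0,f_\bbeta)=\ell(\bbeta)+C$, where $C=\sum_i[f_0(X_i)\log f_0(X_i)-f_0(X_i)]$ depends only on $f_0$. On the other hand, $-l(\bbeta)=\ell(\bbeta)-\bbeta^T\bA^T(\bY-\E[\bY])+\sum_i\log Y_i!$. Consequently, for any $\bbeta\in\R^p$,
\begin{equation*}
K(f_0,f_{\widehat{\bbeta}^{gL}})-K(f_0,f_\bbeta)=\ell(\widehat{\bbeta}^{gL})-\ell(\bbeta)=\bigl[-l(\widehat{\bbeta}^{gL})+l(\bbeta)\bigr]+(\widehat{\bbeta}^{gL}-\bbeta)^T\bA^T(\bY-\E[\bY]).
\end{equation*}

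Second, I invoke the defining optimality of $\widehat{\bbeta}^{gL}$: for every $\bbeta$, $-l(\widehat{\bbeta}^{gL})+\sum_k\lambda_k^g\|\widehat{\bbeta}^{gL}_{G_k}\|_2\le -l(\bbeta)+\sum_k\lambda_k^g\|\bbeta_{G_k}\|_2$. Substituting this into the previous identity yields
\begin{equation*}
K(f_0,\widehat f^{gL})\le K(f_0,f_\bbeta)+(\widehat{\bbeta}^{gL}-\bbeta)^T\bA^T(\bY-\E[\bY])+\sum_{k=1}^K\lambda_k^g\bigl(\|\bbeta_{G_k}\|_2-\|\widehat{\bbeta}^{gL}_{G_k}\|_2\bigr).
\end{equation*}

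Third, I decompose the random cross-term block by block, which is where the group structure meets the event $\Omega_g$. Writing $(\widehat{\bbeta}^{gL}-\bbeta)^T\bA^T(\bY-\E[\bY])=\sum_{k=1}^K(\widehat{\bbeta}^{gL}_{G_k}-\bbeta_{G_k})^T\bA_{G_k}^T(\bY-\E[\bY])$ and applying the Cauchy--Schwarz inequality inside each block, on $\Omega_g$ each summand is bounded by $\lambda_k^g\|\widehat{\bbeta}^{gL}_{G_k}-\bbeta_{G_k}\|_2$, and the triangle inequality gives $\|\widehat{\bbeta}^{gL}_{G_k}-\bbeta_{G_k}\|_2\le \|\widehat{\bbeta}^{gL}_{G_k}\|_2+\|\bbeta_{G_k}\|_2$. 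Inserting this bound, the terms $\lambda_k^g\|\widehat{\bbeta}^{gL}_{G_k}\|_2$ cancel against their negative counterparts and two copies of $\lambda_k^g\|\bbeta_{G_k}\|_2$ survive, yielding $K(f_0,\widehat f^{gL})\le K(f_0,f_\bbeta)+2\sum_k\lambda_k^g\|\bbeta_{G_k}\|_2$ for every $\bbeta$. Taking the infimum over $\bbeta\in\R^p$ concludes the proof.

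There is no real obstacle: the proof is essentially algebraic once one notices (i) that $K(f_0,\cdot)$ and $-l(\cdot)$ differ by an affine, mean-zero functional of $\bbeta$, and (ii) that the block decomposition of $\bA^T(\bY-\E[\bY])$ is precisely adapted to the group norms controlled on $\Omega_g$. The only point worth care is the sign bookkeeping so that the penalty evaluated at $\widehat{\bbeta}^{gL}$ cancels cleanly, leaving a pure ``oracle + penalty'' bound with no residual term in $\widehat{\bbeta}^{gL}$.
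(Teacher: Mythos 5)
Your proof is correct and follows essentially the same route as the paper's: the same pivotal identity $K(f_0,\widehat f^{gL}) = K(f_0,f_\bbeta) + l(\bbeta) - l(\widehat\bbeta^{gL}) + (\widehat\bbeta^{gL}-\bbeta)^T\bA^T(\bY-\E[\bY])$, followed by the optimality of $\widehat\bbeta^{gL}$, block-wise Cauchy--Schwarz on $\Omega_g$, and the triangle inequality to cancel the $\|\widehat\bbeta^{gL}_{G_k}\|_2$ terms. Your preliminary decomposition of $-l(\bbeta)$ into a deterministic risk $\ell(\bbeta)$ plus a centered linear term is just a slightly more explicit way of deriving the same identity the paper obtains directly.
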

Note that $$\sum_{k=1}^K \lambda_k^g\|\bbeta_{G_k}\|_2\leq \max_{k\in\{1,\ldots,K\}}\lambda_{k}^{g}\times\|\bbeta\|_{1,2}$$ and  \eqref{slow_oi_gLasso}  is then similar to Inequality (3.9) of \cite{LPvdGT}. We can improve the rate of \eqref{slow_oi_gLasso} at the price of stronger assumptions on the matrix $\bA$. We consider the following assumptions:

\medskip

\noindent\textbf{Assumption 1.} There exists $\mu > 0$ such that the convex set $$\Gamma(\mu) = \left\{\bbeta \in \R^p:\quad\max_{i\in\{1,\ldots,n\}} \left|\sum_{j=1}^p \beta_j\p_j(X_i) - \log f_0(X_i)\right| \leq \mu\right\}$$
contains a non-empty open set of $\R^p$. 

\medskip

In the sequel, we restrict our attention to estimates $\widehat\bbeta^{gL}$ belonging to $\Gamma(\mu)$. Note that we do not impose any upper bound on $\mu$ so this assumption is quite mild. This assumption (or variations of it) has already been considered by \cite{vdG}, \cite{KN13} and \cite{Lemler}. Its role consists in connecting $K(.,.)$ to some empirical quadratic loss functions (see the proof of Theorem \ref{fast_io_gl}).

\medskip

\noindent\textbf{Assumption 2}. For some integer $s \in \{1,\ldots,K\}$ and some constant $r$, the following condition holds:
$$0 < \kappa_n(s,r) = \min_{\substack{J\subset \{1,\ldots,K\}\\|J|\leq s}} \min_{\substack{\bbeta\in\R^p-\{0\}\\\|\bbeta_{J^c}\|_{1,2}\leq r\|\bbeta_J\|_{1,2}}} \frac{(\bbeta^T\bG\bbeta)^{1/2}}{\|\bbeta_J\|_2},$$
where $\bG$ is the Gram matrix defined by $\bG = \bA^T\bC\bA$, where $\bC$ is the diagonal matrix with $C_{i,i}=f_0(X_i)$. 
With a slight abuse, $\bbeta_{J}$  (resp. $\bbeta_{J^c}$) stands for the sub-vector of $\bbeta$ with elements indexed by the indices of the groups $(G_{k})_{k\in J}$ (resp. $(G_{k})_{k\in J^c}$).

\medskip

This assumption is the natural extension of the classical {\it Restricted Eigenvalue condition} introduced by \cite{BRT} to study the Lasso estimate. RE-type assumptions are among the mildest ones to establish oracle inequalities (see \cite{vdGB}). In the Gaussian setting, \cite{LPvdGT} considered similar conditions to establish oracle inequalities for their group-Lasso procedure.  In particular, if $c_0$ is a positive lower bound for $f_0$, then for all $\bbeta\in \R^p$, $$\bbeta^T\bG\bbeta = (\bA\bbeta)^T\bC(\bA\bbeta) \geq c_0 \|\bA\bbeta\|_2^2 = c_0 \sum_{i=1}^n \Big(\sum_{j=1}^p \beta_j \p_j(X_i)\Big)^2 = c_0 \sum_{i=1}^n g_\bbeta^2(X_i),$$ with $g_{\bbeta}=\sum_{j=1}^p \beta_j \p_j.$ If $(\p_j)_{j\in \J}$ is orthonormal  on $[0,1]^{d}$ and if the design is regular, then the last term is the same order as $$ n \int g_{\bbeta}^2(x) dx= n \|\bbeta\|_2^2 \geq n \|\bbeta_J\|_2^2$$ for any subset $J\subset \{1,\ldots,K\}$. Under these assumptions, $\kappa_n^{-2}(s,r)=O(n^{-1})$.

Under Assumption 1, we consider the slightly modified group-Lasso estimate. Let $\alpha>1$ and let us set $$\widehat\bbeta^{gL} \in \underset{\bbeta \in \Gamma(\mu)}{\argmin} \Big\{-l(\bbeta) +\alpha \sum_{k=1}^K \lambda_k^g \|\bbeta_{G_k}\|_2\Big\},\quad \widehat f^{gL}(x)=\exp\Biggl(\sum_{j=1}^p\widehat\beta_j^{gL}\p_j(x)\Biggr)$$ for which we obtain the following fast oracle inequality. 
\begin{thm}\label{fast_io_gl}
Let $\e >0$ and $s$ a positive integer. Let Assumption 2 be satisfied with $s$ and $$r= {\max_k \lambda_k^g \over \min_k \lambda_k^g}\frac{\alpha+1+2\alpha/\e}{\alpha-1}.$$ 
Then there exists a constant $B(\e,\mu)$ depending on $\e$ and $\mu$ such that, on $\Omega_g$, 
\begin{equation}\label{ineg_gl}
K(f_0,\widehat f^{gL}) \leq (1+\e)\inf_{\substack{\bbeta \in \Gamma(\mu)\\|J(\bbeta)|\leq s}}\Bigg\{K(f_0,f_\bbeta)+B(\e,\mu)\frac{\alpha^2|J(\bbeta)|}{\kappa_n^2}\times\left(\max_{k\in\{1,\ldots,K\}} {\lambda_k^g}\right)^2 \Bigg\},
\end{equation}
where $\kappa_n$ stands for $\kappa_n(s,r)$, and $J(\bbeta)$ is the subset of $\{1,\ldots,K\}$ such that $\bbeta_{G_k} =\zero$ if and only if $k \notin J(\bbeta)$.
\end{thm}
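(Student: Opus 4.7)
The plan adapts the Bickel--Ritov--Tsybakov argument to the Poissonian Kullback--Leibler loss. Fix $\bbeta\in\Gamma(\mu)$ with $|J(\bbeta)|\leq s$, write $J:=J(\bbeta)$ and $h:=\widehat\bbeta^{gL}-\bbeta$. A direct computation on the log-likelihood (all $\bbeta$-independent terms cancel when differencing) gives the exact identity
$$K(f_0,\widehat f^{gL})-K(f_0,f_\bbeta) = l(\bbeta)-l(\widehat\bbeta^{gL}) + h^T\bA^T(\bY-\E[\bY]).$$
Optimality of $\widehat\bbeta^{gL}$ over $\Gamma(\mu)$ controls the first term by $\alpha\sum_k\lambda_k^g(\|\bbeta_{G_k}\|_2-\|\widehat\bbeta_{G_k}^{gL}\|_2)$, and on $\Omega_g$ a block-wise Cauchy--Schwarz bounds the noise term by $\sum_k\lambda_k^g\|h_{G_k}\|_2$. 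Splitting according to $J$ and $J^c$ (using $\bbeta_{G_k}=\zero$ on $J^c$ and the reverse triangle inequality on $J$), these combine into the basic inequality
$$K(f_0,\widehat f^{gL})+(\alpha-1)\sum_{k\in J^c}\lambda_k^g\|h_{G_k}\|_2 \leq K(f_0,f_\bbeta) + (\alpha+1)\sum_{k\in J}\lambda_k^g\|h_{G_k}\|_2.$$

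Next I split into two regimes, writing $M:=\max_k\lambda_k^g$ and $m:=\min_k\lambda_k^g$. If $K(f_0,f_\bbeta)>(2\alpha/\e)\sum_{k\in J}\lambda_k^g\|h_{G_k}\|_2$, the basic inequality directly yields $K(f_0,\widehat f^{gL})\leq (1+\e)K(f_0,f_\bbeta)$ (via $(\alpha+1)/(2\alpha)<1$ since $\alpha>1$), and the oracle bound holds with zero variance term. Otherwise $K(f_0,f_\bbeta)\leq (2\alpha M/\e)\|h_J\|_{1,2}$; plugging this into the basic inequality and dropping the nonnegative $K(f_0,\widehat f^{gL})$ on the left gives $(\alpha-1)m\|h_{J^c}\|_{1,2}\leq M(\alpha+1+2\alpha/\e)\|h_J\|_{1,2}$, which is exactly the cone condition $\|h_{J^c}\|_{1,2}\leq r\|h_J\|_{1,2}$ for the $r$ prescribed in Assumption~2. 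Assumption~2 then gives $\kappa_n\|h_J\|_2\leq (h^T\bG h)^{1/2}$.

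Finally, Assumption~1 enters through a quadratic lower bound on $K$. Since both $\bbeta$ and $\widehat\bbeta^{gL}$ lie in $\Gamma(\mu)$, the elementary inequality $e^u-u-1\geq c(\mu)u^2$ for $|u|\leq\mu$, used in representation \eqref{k_pos}, gives $K(f_0,f)\geq c(\mu)\sum_i f_0(X_i)(\log f(X_i)-\log f_0(X_i))^2$ for $f\in\{f_\bbeta,\widehat f^{gL}\}$. Using $((\bA h)_i)^2\leq 2[(\log\widehat f^{gL}(X_i)-\log f_0(X_i))^2+(\log f_\bbeta(X_i)-\log f_0(X_i))^2]$ and summing weighted by $f_0(X_i)$, I obtain the quadratic control $h^T\bG h\leq 2c(\mu)^{-1}\bigl(K(f_0,\widehat f^{gL})+K(f_0,f_\bbeta)\bigr)$. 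Combined with the basic inequality, the bound $\|h_J\|_{1,2}\leq\sqrt{|J|}\|h_J\|_2$ and the RE estimate, this produces
$$K(f_0,\widehat f^{gL})-K(f_0,f_\bbeta) \leq \frac{(\alpha+1)M\sqrt{2|J|}}{\kappa_n\sqrt{c(\mu)}}\sqrt{K(f_0,\widehat f^{gL})+K(f_0,f_\bbeta)}.$$
A Young inequality $xy\leq \delta x^2+y^2/(4\delta)$ with $\delta$ chosen proportional to $\e$ absorbs the $K(f_0,\widehat f^{gL})$ term on the right and delivers the announced oracle bound, with $B(\e,\mu)$ of order $1/(\e\,c(\mu))$ and the factor $(\alpha+1)^2\leq 4\alpha^2$ giving the $\alpha^2$-dependence. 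The main technical obstacle is precisely this quadratic control: without the $\Gamma(\mu)$-constraint on both the target and the estimator, the exponential loss cannot be locally lower-bounded by a quadratic, which is why the theorem is proved for the constrained estimator.
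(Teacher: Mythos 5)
Your proposal is correct and follows essentially the same route as the paper: the same exact likelihood/KL decomposition, the same use of optimality plus the $\Omega_g$ noise bound to get the basic inequality (your $(\alpha+1)$/$(\alpha-1)$ bookkeeping is an equivalent rewriting of the paper's $2\alpha$ version), the same two-regime split yielding either the trivial $(1+\e)$ bound or the cone condition with the prescribed $r$, then Assumption 2 and absorption via Young's inequality. The only cosmetic difference is that you obtain the quadratic lower bound $K(f_0,f)\geq c(\mu)\sum_i f_0(X_i)u_i^2$ directly from $e^u-u-1\geq c(\mu)u^2$ on $|u|\leq\mu$, whereas the paper routes the same fact (with $c(\mu)=\phi(-\mu)/\mu^2$) through Bach's self-concordance lemma.
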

Let us comment each term of the right-hand side of (\ref{ineg_gl}). The first term is an approximation term, which can vanish if $f_0$ can be decomposed on the dictionary. The second term is a variance term, according to the usual terminology, which is proportional to the size of $J(\bbeta)$. Its shape is classical in the high dimensional setting. See for instance Theorem 3.2 of \cite{LPvdGT} for the group-Lasso in linear models, or Theorem 6.1 of \cite{BRT} and Theorem 3 of \cite{blpv} for the Lasso. If the order of magnitude of $\lambda_k^g$ is $\sqrt{n\times\max(\log p; |G_k|)}$ (see Section \ref{sec:order}) and if $\kappa_n^{-2}=O(n^{-1})$, the order of magnitude of this variance term is not larger than $|J(\bbeta)|\times \max(\log p; |G_k|)$.  Finally, if $f_0$ can be well approximated (for the empirical Kullback-Leibler divergence) by a group-sparse combination of the functions of the dictionary, then the right hand side of (\ref{ineg_gl}) will take small values. So, the previous result justifies our group-Lasso procedure from the theoretical point of view. Note that \eqref{slow_oi_gLasso} and (\ref{ineg_gl}) also show the interest of considering weights as small as possible. 

\cite{BLG14} established rates of convergence under stronger assumptions, namely all coordinates of the analog of $\bA$ are bounded by a quantity $L$, where $L$ is viewed as a constant. Rates depend on $L$ in an exponential manner and would highly deteriorate if $L$ depended on $n$ and $p$. So, this assumption is not reasonable if we consider   dictionaries such as wavelets or histograms (see Section~\ref{sec:order}).

\subsection{Oracle inequalities for the Lasso estimate}\label{ssec_io_single}
For the sake of completeness, we provide oracle inequalities for the Lasso. Theorems \ref{slow_io_gl} and \ref{fast_io_gl} that deal with the group-Lasso  estimate can be adapted to the non-grouping strategy when we take groups of size $1$. Subsequent results are similar to those established by \cite{Lemler} who studied the Lasso estimate for the high-dimensional Aalen multiplicative intensity model. The block $\ell_1$-norm $\|\cdot\|_{1,2}$ becomes the usual $\ell_1$-norm and the group support $J(\bbeta)$ is simply the support of $\bbeta$. As previously, we only work on the probability set $\Omega$ defined by
\begin{equation}\label{def_omega_prime}
\Omega= \Big\{|\bA_{j}^T(\bY-\E[\bY])| \leq \lambda_j\quad\forall j\in \{1,\ldots,p\}\Big\}.
\end{equation}
Theorem \ref{calibration} asserts that $\P(\Omega) \geq 1-{3\over p^{\gamma-1}}$ that goes to 1 as soon as $\gamma>1$. We obtain a slow oracle inequality for $\widehat f^{L}$:
\begin{Cor}\label{cor-simple}
On $\Omega$, $$ K(f_0,\widehat f^{L}) \leq \inf_{\bbeta \in \R^p} \Big\{K(f_0, f_\bbeta) + 2 \sum_{j=1}^p \lambda_j|\beta_j|\Big\}.$$
\end{Cor}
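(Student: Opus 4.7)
The plan is to derive this slow oracle inequality by the standard strategy: start from the optimality of $\widehat\bbeta^L$, rewrite the log-likelihood difference in terms of the empirical Kullback--Leibler divergences plus a residual linear in the noise vector $\bY-\E[\bY]$, and then control that residual on the event $\Omega$.

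First I would use the definition \eqref{critLasso} of $\widehat\bbeta^L$, which gives, for every $\bbeta\in\R^p$,
\[
 -l(\widehat\bbeta^L)+\sum_{j=1}^p\lambda_j|\widehat\beta_j^L|\ \leq\ -l(\bbeta)+\sum_{j=1}^p\lambda_j|\beta_j|,
\]
i.e. $l(\bbeta)-l(\widehat\bbeta^L)\leq\sum_j\lambda_j(|\beta_j|-|\widehat\beta_j^L|)$. Next I would compute, directly from the expressions of $l$ and of $K(f_0,\cdot)$ recalled in the paper, the identity
\[
 K(f_0,\widehat f^L)-K(f_0,f_\bbeta)\ =\ \bigl[l(\bbeta)-l(\widehat\bbeta^L)\bigr]\ -\ \sum_{j=1}^p(\beta_j-\widehat\beta_j^L)\,\bA_j^T(\bY-\E[\bY]),
\]
obtained by noting that the $\log(Y_i!)$ terms cancel and that the difference between the $Y_i$--weighted and $f_0(X_i)$--weighted log-terms produces exactly $\sum_j(\widehat\beta_j^L-\beta_j)\bA_j^T(\bY-\E[\bY])$.

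Combining the two yields
\[
 K(f_0,\widehat f^L)\ \leq\ K(f_0,f_\bbeta)+\sum_{j=1}^p\lambda_j(|\beta_j|-|\widehat\beta_j^L|)-\sum_{j=1}^p(\beta_j-\widehat\beta_j^L)\,\bA_j^T(\bY-\E[\bY]).
\]
On the event $\Omega$ defined in \eqref{def_omega_prime} we have $|\bA_j^T(\bY-\E[\bY])|\leq\lambda_j$ for every $j$, so the last sum is bounded in absolute value by $\sum_j\lambda_j|\beta_j-\widehat\beta_j^L|\leq\sum_j\lambda_j(|\beta_j|+|\widehat\beta_j^L|)$ by the triangle inequality. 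Plugging this bound back, the $\lambda_j|\widehat\beta_j^L|$ terms cancel and the $\lambda_j|\beta_j|$ terms combine into $2\sum_j\lambda_j|\beta_j|$, giving
\[
 K(f_0,\widehat f^L)\leq K(f_0,f_\bbeta)+2\sum_{j=1}^p\lambda_j|\beta_j|.
\]
Taking the infimum over $\bbeta\in\R^p$ concludes the proof. There is no real obstacle here; the only point requiring care is the algebraic identity linking $l(\bbeta)-l(\widehat\bbeta^L)$ to the difference of Kullback--Leibler divergences plus the noise residual, since from that identity everything reduces to invoking $\Omega$ and the triangle inequality. (Note that this is exactly the specialization of Theorem~\ref{slow_io_gl} to blocks of size one, with $\lambda_k^g$ replaced by $\lambda_j$, so one could alternatively just cite that theorem.)
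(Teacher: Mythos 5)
Your proof is correct and takes essentially the same route as the paper: the paper obtains Corollary~\ref{cor-simple} by specializing Theorem~\ref{slow_io_gl} to groups of size one, and the proof of that theorem consists of exactly your steps --- the identity $K(f_0,\widehat f)-K(f_0,f_\bbeta)=l(\bbeta)-l(\widehat\bbeta)+(\widehat\bbeta-\bbeta)^T\etab$, the optimality of the penalized estimator, the bound $|(\widehat\bbeta-\bbeta)^T\etab|\leq\sum_j\lambda_j|\widehat\beta_j-\beta_j|$ on $\Omega$, and the triangle inequality to make the $|\widehat\beta_j^L|$ terms cancel. Your closing remark that one could simply cite Theorem~\ref{slow_io_gl} with singleton blocks is precisely how the paper itself presents this corollary.
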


Now, let us consider fast oracle inequalities. In this framework, Assumption 2 is replaced with the following:

\medskip

\noindent\textbf{Assumption 3}. For some integer $s \in \{1,\ldots,p\}$ and some constant $r$, the following condition holds: $$0 < \kappa_n(s,r) = \min_{\substack{J\subset \{1,\ldots,p\}\\|J|\leq s}} \min_{\substack{\bbeta\in\R^p-\{0\}\\\|\bbeta_{J^c}\|_{1}\leq r\|\bbeta_J\|_{1}}} \frac{(\bbeta^T\bG\bbeta)^{1/2}}{\|\bbeta_J\|_2},$$ where $\bG$ is the Gram matrix defined by $\bG = \bA^T\bC\bA$, where $\bC$ is the diagonal matrix with $C_{i,i}=f_0(X_i)$. 

\medskip

Under Assumption 1, we consider the slightly modified Lasso estimate. Let $\alpha>1$ and let us set $$\widehat\bbeta^{L} \in \underset{\bbeta \in \Gamma(\mu)}{\argmin} \Big\{-l(\bbeta) +\alpha \sum_{j=1}^p \lambda_j |\bbeta_j|\Big\},\quad \widehat f^{L}(x)=\exp\Biggl(\sum_{j=1}^p\widehat\beta_j^{L}\p_j(x)\Biggr)$$ for which we obtain the following fast oracle inequality. 
\begin{Cor}\label{cor-RE}
Let $\e >0$ and $s$ a positive integer. Let Assumption 3 be satisfied with $s$ and  $$r= {\max_j \lambda_j \over \min_j \lambda_j}\frac{\alpha+1+2\alpha/\e}{\alpha-1}.$$ Then there exists a constant $B(\e,\mu)$ depending on $\e$ and $\mu$ such that, on $\Omega$,  $$K(f_0,\widehat f^{L}) \leq (1+\e)\inf_{\substack{\bbeta \in \Gamma(\mu)\\|J(\bbeta)|\leq s}}\Bigg\{K(f_0,f_\bbeta)+B(\e,\mu)\frac{\alpha^2|J(\bbeta)|}{\kappa_n^2}(\max_{j\in\{1,\ldots,p\}} {\lambda_j}^2) \Bigg\},$$ where $\kappa_n$ stands for $\kappa_n(s,r)$, and $J(\bbeta)$ is the support of $\beta$.
\end{Cor}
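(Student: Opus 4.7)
The plan is to recognize that Corollary \ref{cor-RE} is exactly the specialization of Theorem \ref{fast_io_gl} to singleton groups, so the same proof strategy transfers with only notational modifications. When $K=p$ and each $G_k=\{k\}$, the block norm $\|\bbeta\|_{1,2}$ collapses to $\|\bbeta\|_1$, each $\|\bbeta_{G_k}\|_2$ becomes $|\beta_k|$, the event $\Omega_g$ is replaced by $\Omega$ (with the $\lambda_j$'s playing the role of the $\lambda_k^g$'s, even though these two families of weights come from different concentration inequalities), Assumption 2 coincides with Assumption 3, and the group-support $J(\bbeta)$ becomes the ordinary support of $\bbeta$. The key observation is that the proof of Theorem \ref{fast_io_gl} uses only the event $\{|\bA_j^T(\bY-\E[\bY])|\leq\lambda_j\ \forall j\}$ (or its block analogue) and does not depend on how the weights were derived, so the argument carries over verbatim.

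Concretely, I would first exploit the optimality of $\widehat\bbeta^L$ in the constrained problem over $\Gamma(\mu)$: for any competitor $\bbeta\in\Gamma(\mu)$ with $|J(\bbeta)|\leq s$,
\[
-l(\widehat\bbeta^L)+\alpha\sum_{j=1}^p\lambda_j|\widehat\beta_j^L|\leq -l(\bbeta)+\alpha\sum_{j=1}^p\lambda_j|\beta_j|.
\]
Substituting the explicit form of $l$, rewriting in terms of the empirical Kullback-Leibler divergence, and isolating the stochastic term $(\bY-\E[\bY])^T\bA(\widehat\bbeta^L-\bbeta)$, which on $\Omega$ is bounded by H\"older's inequality and the definition of $\Omega$, yields the basic inequality
\[
K(f_0,\widehat f^L)+\alpha\sum_{j}\lambda_j|\widehat\beta_j^L|\leq K(f_0,f_\bbeta)+\alpha\sum_{j}\lambda_j|\beta_j|+\sum_{j}\lambda_j|\widehat\beta_j^L-\beta_j|.
\]
Setting $\mathbf{h}=\widehat\bbeta^L-\bbeta$, splitting the sums along $J:=J(\bbeta)$ and $J^c$, using the triangle inequality and $\alpha>1$, and combining with a preliminary Young inequality with parameter $\e$ simultaneously produces the cone condition $\|\mathbf{h}_{J^c}\|_1\leq r\|\mathbf{h}_J\|_1$ for the constant $r$ stated in the corollary and a control of the form
\[
K(f_0,\widehat f^L)\leq (1+\e)K(f_0,f_\bbeta)+C(\alpha,\e)\max_j\lambda_j\cdot\|\mathbf{h}_J\|_1.
\]
Assumption 3 then provides $\|\mathbf{h}_J\|_1\leq\sqrt{|J|}\|\mathbf{h}_J\|_2\leq\kappa_n^{-1}\sqrt{|J|}(\mathbf{h}^T\bG\mathbf{h})^{1/2}$.

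The step I expect to be the main obstacle, inherited from the proof of Theorem \ref{fast_io_gl}, is converting the quadratic form $\mathbf{h}^T\bG\mathbf{h}$ into the Kullback-Leibler divergence while preserving the leading $(1+\e)$ multiplier. This is precisely where Assumption 1 enters: since both $\bbeta$ and $\widehat\bbeta^L$ lie in $\Gamma(\mu)$, the increments $u_i=\log(\widehat f^L(X_i)/f_0(X_i))$ satisfy $|u_i|\leq 2\mu$, and the elementary bound $e^u-u-1\geq c(\mu)u^2$ on $[-2\mu,2\mu]$ combined with identity \eqref{k_pos} yields $c(\mu)\mathbf{h}^T\bG\mathbf{h}\leq K(f_0,\widehat f^L)+K(f_0,f_\bbeta)$, up to cross terms absorbed by a further Young inequality. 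Carefully propagating the $\e$ factor through this Bernstein-type quadratic comparison and collecting $\mu$-dependent constants into a single $B(\e,\mu)$ then produces the announced fast oracle inequality.
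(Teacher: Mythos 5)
Your proposal is correct and matches the paper's own treatment: the paper proves Corollary~\ref{cor-RE} in one line, by specializing Theorem~\ref{fast_io_gl} to groups of size one, which is exactly your opening observation. Your detailed re-derivation (basic inequality on $\Omega$, cone condition with the stated $r$, Assumption~3, then the self-concordance-type comparison between $\mathbf{h}^T\bG\mathbf{h}$ and the Kullback--Leibler divergence under Assumption~1) reproduces the paper's proof of Theorem~\ref{fast_io_gl} itself, with only cosmetic differences (the paper uses Bach's self-concordance lemma and a triangle inequality where you use the elementary pointwise bound $e^u-u-1\geq c(\mu)u^2$ and $(a-b)^2\leq 2a^2+2b^2$).
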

This corollary is derived easily from Theorem~\ref{fast_io_gl} by considering all groups of size 1. Comparing Corollary~\ref{cor-RE} and Theorem~\ref{fast_io_gl}, we observe that the group-Lasso can improve the Lasso estimate when the function $f_0$ can be well approximated by a function $f_\bbeta$ so that the number of non-zero groups of $\bbeta$ is much smaller than the total number of non-zero coefficients. The simulation study of the next section illustrates this comparison from the numerical point of view.

\section{Simulation study}\label{sec_simulation}

\paragraph{Simulation settings.} We explore the empirical performance of the Lasso and the group Lasso strategies using simulations. We considered different forms for intensity functions by taking the standard functions of \cite{DJ94}: blocks, bumps, doppler, heavisine, to set $g_0$. The signal to noise ratio was increased by multiplying the intensity functions by a factor $\alpha$ taking values in $\{1, \hdots,7\}$, $\alpha=7$ corresponding to the most favorable configuration. Observations $Y_i$ were generated such that $Y_i|X_i \sim {\mathcal Poisson}(f_0(X_i))$, with $f_0=\alpha \exp(g_0)$, and $(X_1,\hdots,X_n)$ was set as the regular grid of length $n=2^{10}$. Each configuration was repeated 20 times. Our method was implemented using the \texttt{grpLasso} \texttt{R} package of \cite{mvdgb} to which we provide our concentration-based weights. The corresponding code is available at \texttt{http://pbil.univ-lyon1.fr/members/fpicard/software.html}.

\paragraph{The basis and the dictionary frameworks.} The dictionary we consider is built on the (periodized) Haar and Daubechies basis, and on the Fourier basis, in order to catch piece-wise constant trends, localized peaks and periodicities. Each orthonormal system has $n$ elements, which makes $p=n$ when systems are considered separately, and $p=2n$ or $3n$ depending on the considered dictionary. For wavelets, the dyadic structure of the decomposition allows us to group the coefficients scale-wise by forming groups of coefficients of size $2^q$. As for the Fourier basis, groups (also of size $2^q$) are formed by considering successive coefficients (while keeping their natural ordering). When grouping strategies are considered, we set all groups at the same size. 

\paragraph{Weights calibration in practice.} First for both the Lasso and the group Lasso, we estimate $V_j$ (resp $V_j^g$) by $\widehat V_j$ (resp $\widehat V_j^g$) instead of using $\widetilde V_j$ (resp $\widetilde V_j^g$). This simplification is easier to compute in practice, and does not have any impact on the performance of the procedures. Lasso weights only depend on hyperparameter $\gamma$ that we choose equal to $1.01$, following the arguments at the end of Section \ref{ssec_lambdaj}. As for the group Lasso weights (Theorem \ref{theo_1_plus}), the first term is replaced by $\sqrt{\sum_{j\in G_k} \widehat V_j}$, as it is governed by a quantity that tends to one when $p$ is large. The second term was calibrated by using different values of $\gamma$, and the best empirical performance were achieved so that the left- and right-hand terms of (\ref{def_lambda_k^g}) were approximatively equal. This resumes to group-Lasso weights of the form $2 \sqrt{\sum_{j\in G_k} \widehat V_j}$.

\paragraph{Competitors.} We compete our Lasso procedure (\texttt{Lasso.exact} in the sequel), with the Haar-Fisz transform (for Haar and Daubechies systems) applied to the same data followed by soft-thresholding. Here we mention that we did not perform cycle-spinning (that is often included in denoising procedures) in order to focus on the effects of thresholding only. We also implemented the half-fold cross-validation proposed by \cite{NasonCV} in the Poisson case to set the weights in the penalty, with the proper scaling ($2^{s/2}\lambda$, with $s$ the scale of the wavelet coefficients) as proposed by \cite{sardy}. Then we compare the performance of the group-Lasso with varying group sizes (2,4,8) to the Lasso, to assess the benefits or grouping wavelet coefficients.

\paragraph{Performance measurement.} For any estimate $\hat f$, reconstruction performance were measured using the (normalized) mean-squared error $MSE = \|\widehat{f}-f_0\|_2^2/\|f_0\|_2^2$ (Figure \ref{Fig:simulMSE}), and selection performance were measured by the standard indicators: accuracy on support recovery, sensitivity (proportion of true non-null coefficients among selected coefficients) and specificity of detection (proportion of true null coefficients among non-selected coefficients), based on the estimated support of $\widehat{\bbeta}$ and on the support of $\bbeta_0$, the coefficients associated with the projection of function $f_0$ on the dictionary. 

\paragraph{Performance in the basis setting.} The first step of our simulation study relies on wavelet basis (Haar or Daubechies) and not on a dictionary approach (considered in a second step) in order to compare our calibrated weights with other methods that rely on penalized strategy. It appears that, except for the \texttt{bumps} function, the Lasso with exact weights shows the lowest reconstruction error whatever the shape of the intensity function (Figure \ref{Fig:simulMSE}). Moreover, better performance of the Lasso with exact weights in cases of low intensity emphasize the interest of theoretically calibrated procedures rather than asymptotic approximations (like the Haar-Fisz transform). In the case of \texttt{bumps}, cross-validation seems to perform better than the Lasso, but when looking at reconstructed average function (Figure \ref{Fig:simulreconsLasso}) this lower reconstruction error of cross-validation is associated with higher local variations around the peaks. Compared with Haar-Fisz, the gain of using exact weights is substantial even when the signal to noise ratio is high, which indicates that even in the validity domain of the Haar-Fisz transform (large intensities), the Lasso combined with exact thresholds is more suitable (Figure \ref{Fig:simulreconsLasso}). As for the group Lasso, its performance highly depend on the group size: while groups of size 2 show similar performance as the Lasso, groups of size 4 and 8 increase the reconstruction error (Figure \ref{Fig:simulMSE} and \ref{Fig:simulreconsgroup}), since they are not scaled to the size of the irregularities in the signal. This trend is not systematic as the group Lasso appears to be adapted to functions that are more regular (Heavisine), and seems to avoid edge effects in some situations. Very interestingly, the group Lasso of size 2 increases the sensitivity of detection for the Lasso (Figure \ref{Fig:simulSens}), while keeping the same specificity, which suggests that it accounts for (true) local variations of nearby coefficients, which results in a slightly better reconstruction error. As a last remark we mention that the sensitivities of all methods are rather low regarding coefficients selection, meaning that many true non null coefficients remain unselected. Since reconstruction errors are satisfactory, this means that only few coefficients needed to be selected for good reconstruction properties in the functional domain.

\paragraph{Performance in the dictionary framework.} Lastly, we explored the performance of the dictionary approach, by considering different dictionaries to estimate each function: Daubechies (D), Fourier (F), Haar (H), or their combinations (Figure \ref{Fig:simu3}). Rich dictionaries can be very powerful to catch complex shapes in the true intensity function (like the notch in the heavisine case Figure \ref{Fig:dicofun}), and the richest dictionary (DFH) often leads to the lowest reconstruction error (MSE) on average. However the richest dictionary (DFH) is not always the best choice in terms of reconstruction error, which is stricking in the case of the \texttt{blocks} function. In this case the Haar system only would be preferable for the Lasso (Figure \ref{Fig:dicoMSE}). For the group-Lasso and the \texttt{blocks} intensity function, the combination of the Daubechies and the Haar systems provides the best MSE, but when looking at the reconstructed intensity (Figure \ref{Fig:dicofun}-blocks), the Daubechies system introduces wiggles are not relevant for \texttt{blocks}. Also, richer dictionaries do not necessarily lead to more selected parameters (Figure \ref{Fig:dicoMSE}), which illustrates that selection depends on the redundancies between the systems elements of the dictionary. In practice we often do not have any \textit{prior} knowledge concerning the elements that shape the signal, and these simulations suggest that the blind use of the richest dictionary may not be the best strategy in terms of reconstructed functions. Consequently, in the following application, we propose to adapt the half-fold cross validation of \cite{NasonCV} to choose the best combinations of systems.

\begin{figure}
\centering
\subfloat[Mean Square error of reconstruction.]{
\includegraphics[scale=0.45]{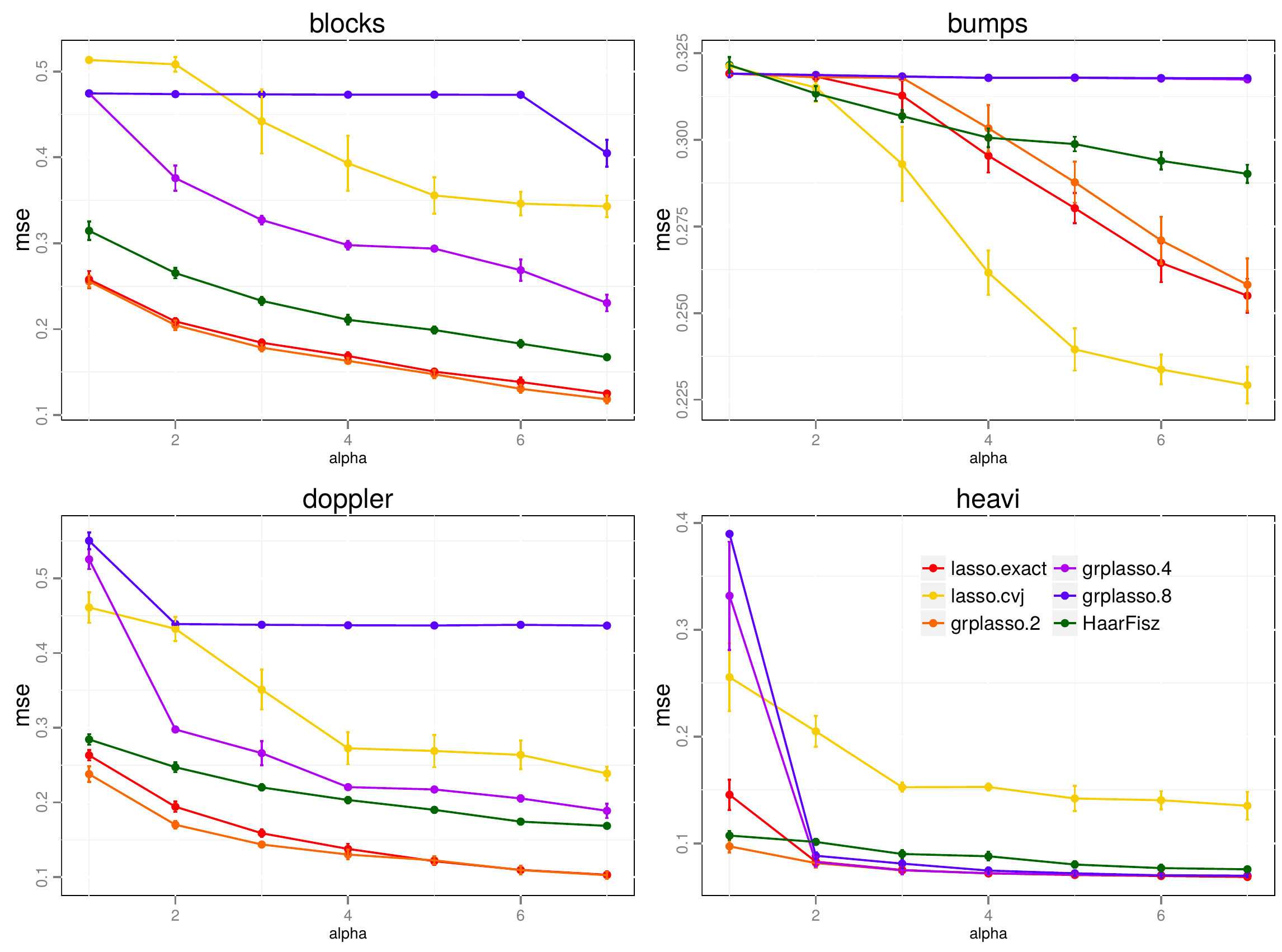}
\hfill
\label{Fig:simulMSE}}\\
\subfloat[Sensitivity of selection.]{
\includegraphics[scale=0.45,angle=90]{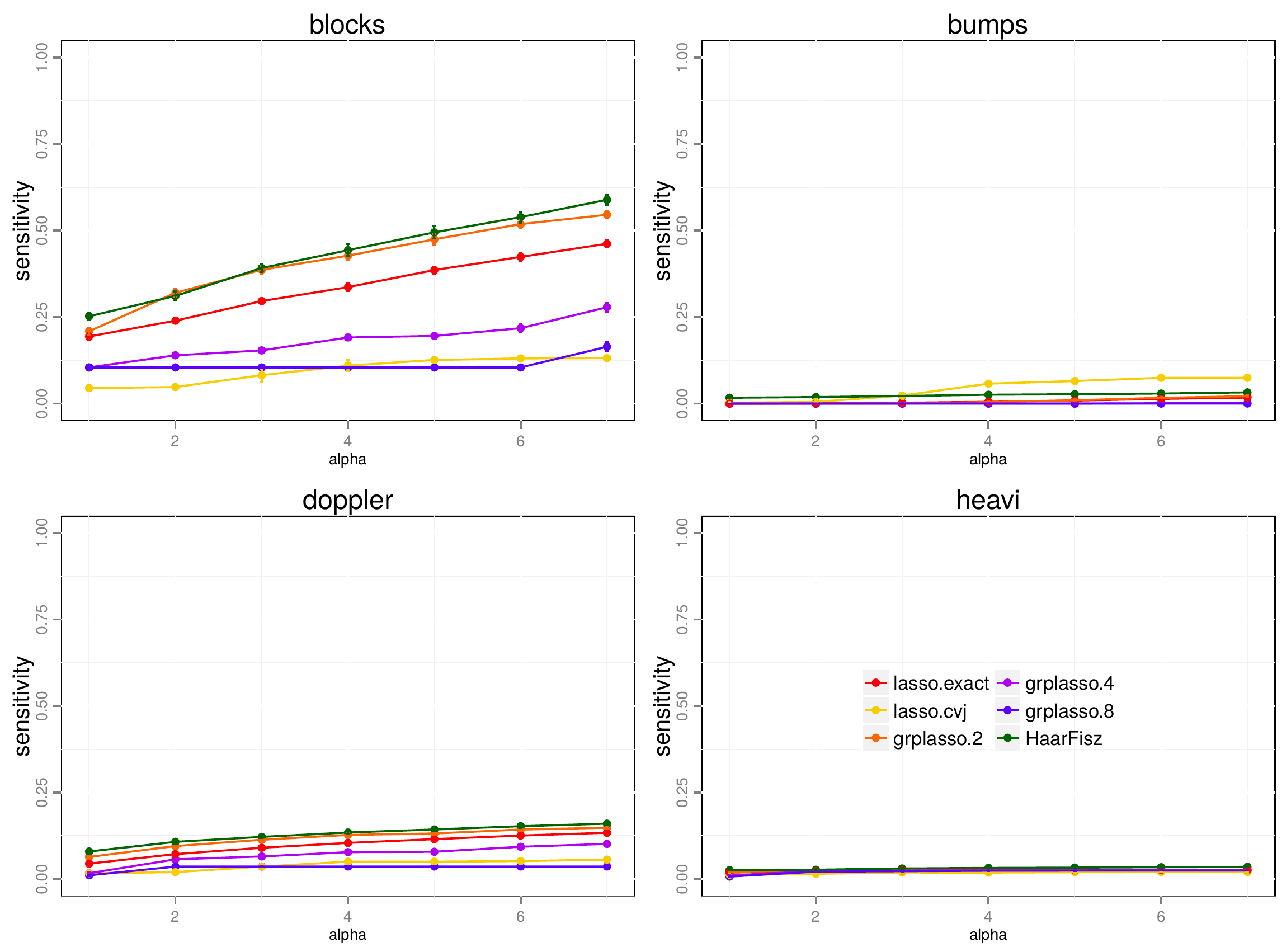}  
\hfill
\label{Fig:simulSens}}
\caption{Average (over 20 repetitions) Mean Square Error of reconstruction (\ref{Fig:simulMSE}) and sensitivity of selection (\ref{Fig:simulSens}) of different methods for the estimation of simulated intensity functions according to function shapes (blocks, bumps, doppler, heavisine) and signal strength ($\alpha$). \texttt{Lasso.exact}: Lasso penalty with our data-driven theoretical weights, \texttt{Lasso.cvj}: Lasso penalty with weights calibrated by cross validation with scaling $2^{s/2}\lambda$, \texttt{group.Lasso.2/4/8}: group Lasso penalty with our data-driven theoretical weights with group sizes 2/4/8, \texttt{HaarFisz}: Haar-Fisz tranform followed by soft-thresholding.
\label{Fig:simu1}}
\end{figure}

\begin{figure}
\centering
\subfloat[Average reconstructed functions for the Lasso and competitors.]{
\includegraphics[scale=0.5]{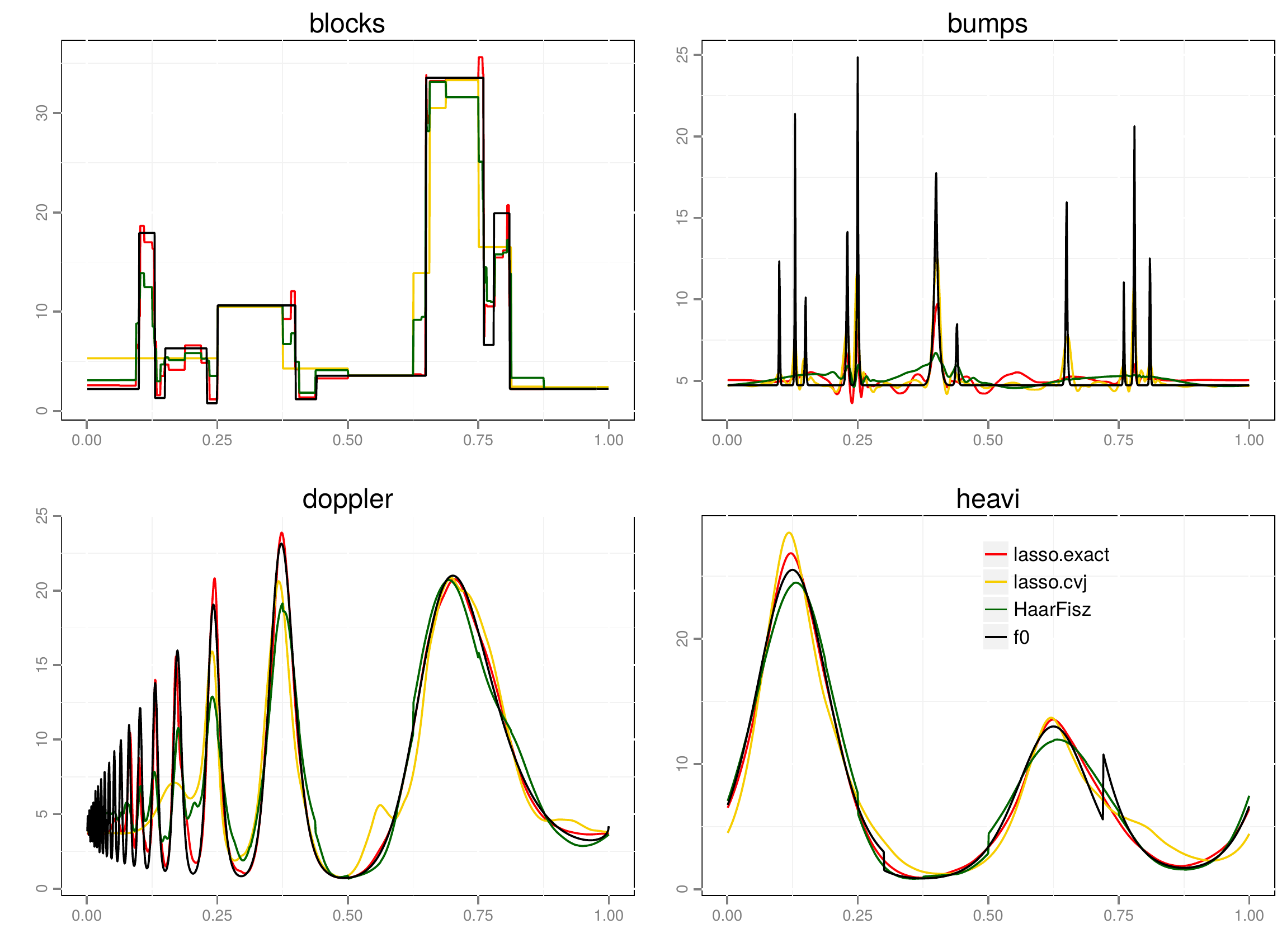} 
\hfill
\label{Fig:simulreconsLasso}}\\
\subfloat[Average reconstructed functions for the group strategies.]{
\includegraphics[scale=0.5]{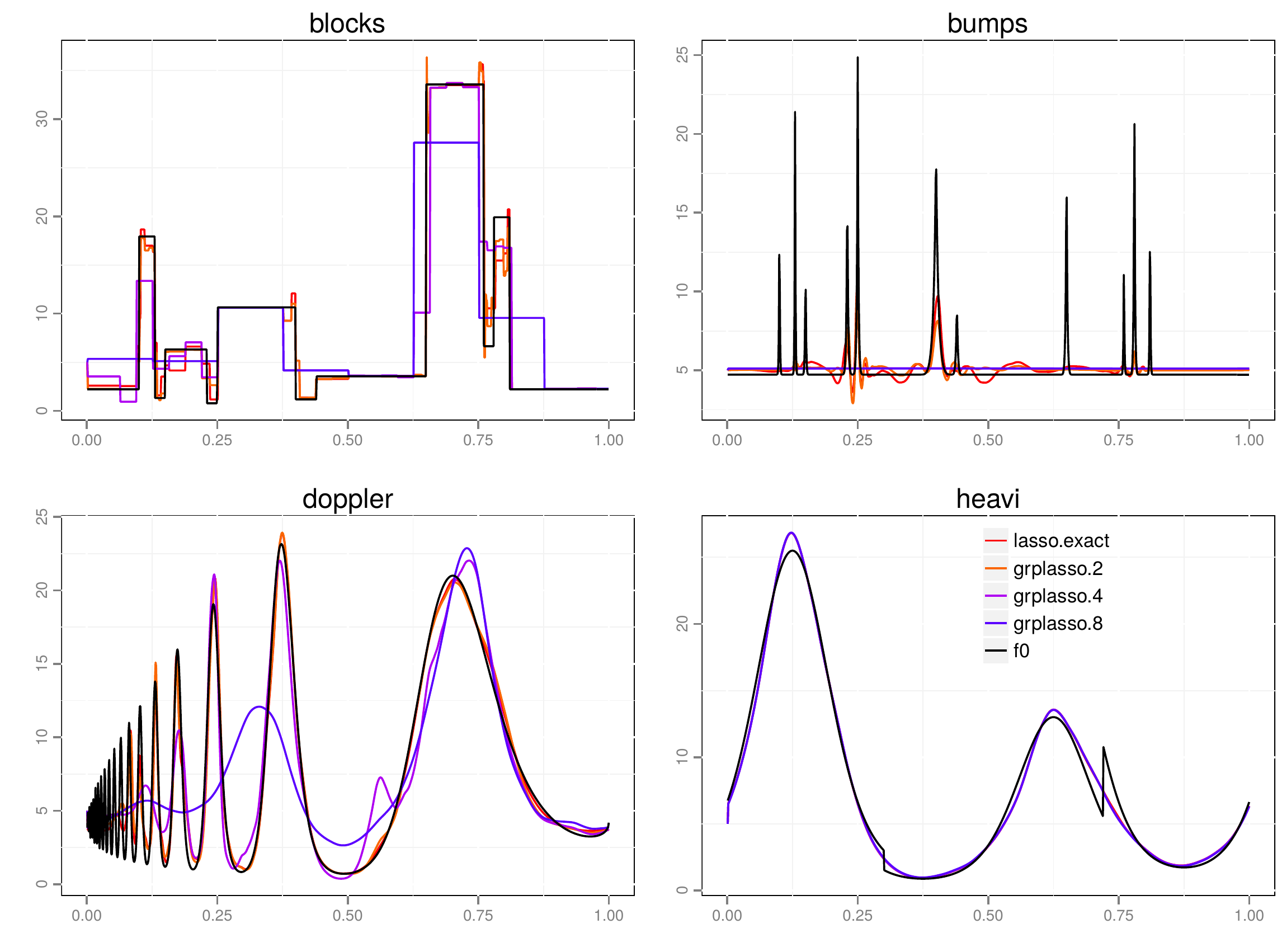}  
\hfill
\label{Fig:simulreconsgroup}}
\caption{Average (over 20 repetitions) reconstructed functions by different methods of estimation according to function shapes (blocks, bumps, doppler, heavisine). Top panel corresponds to non-grouped strategies (\ref{Fig:simulreconsLasso}) and bottom panel compares group-strategies to the Lasso (\ref{Fig:simulreconsgroup}). \texttt{Lasso.exact}: Lasso penalty with our data-driven theoretical weights, \texttt{Lasso.cvj}: Lasso penalty with weights calibrated by cross validation with scaling $2^{s/2}\lambda$, \texttt{group.Lasso.2/4/8}: group Lasso penalty with our data-driven theoretical weights with group sizes 2/4/8, \texttt{HaarFisz}: Haar-Fisz tranform followed by soft-thresholding, \texttt{f0}: simulated intensity function.
\label{Fig:simu2}}
\end{figure}

\begin{figure}
\centering
\subfloat[Average Mean Square Error for different dictionaries with respect to the average number of selected coefficients (df).]{
\includegraphics[scale=0.42]{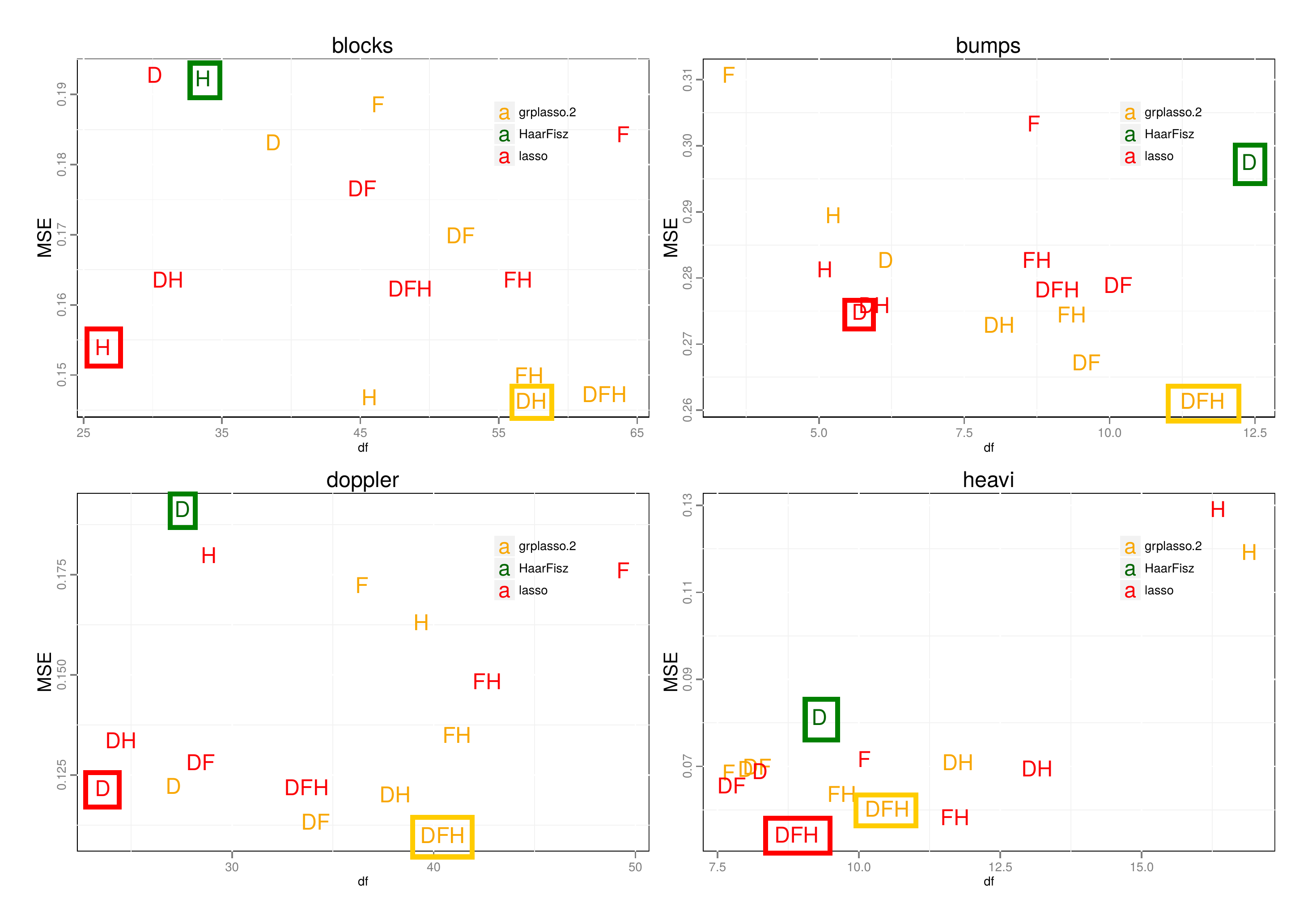} 
\hfill
\label{Fig:dicoMSE}}\\
\subfloat[Reconstructed functions for the dictionaries with the smallest MSE.]{
\includegraphics[scale=0.42]{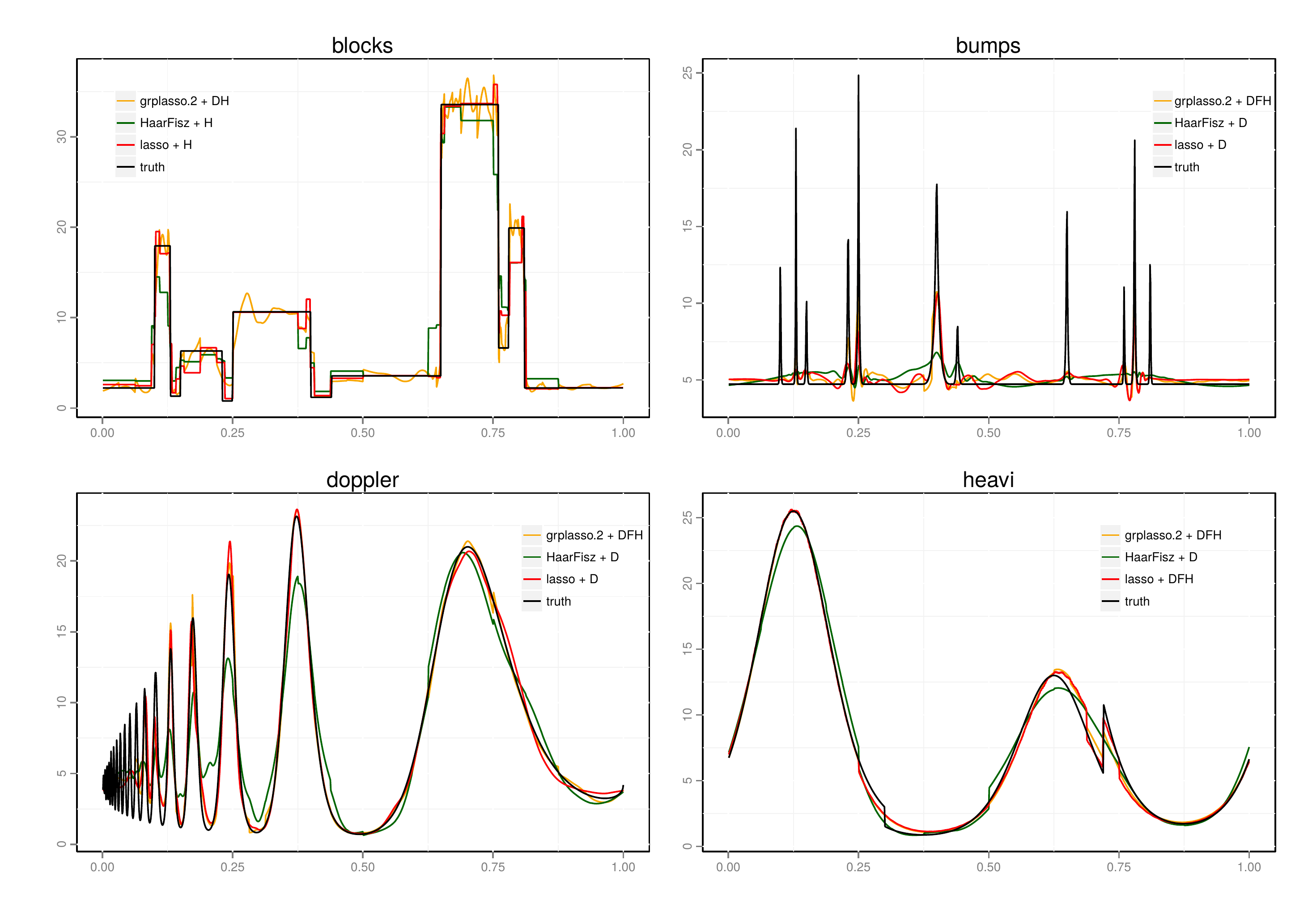}
\hfill
\label{Fig:dicofun}}
\caption{Average (over 20 repetitions) Mean Square Errors and number of selected coefficients (df) (\ref{Fig:dicoMSE}), and reconstructed functions (\ref{Fig:dicofun}) for different dictionaries: Daubechies (D), Fourier (F), Haar (H) and their combinations. \texttt{Lasso.exact}: Lasso penalty with our data-driven theoretical weights, \texttt{group.Lasso.2}: group Lasso penalty with our data-driven theoretical weights with group sizes 2, \texttt{HaarFisz}: Haar-Fisz tranform followed by soft-thresholding.
\label{Fig:simu3}}
\end{figure}

\section{Applications}\label{sec_applications}
The analysis of biological data has faced a new challenge with the extensive use of next generation sequencing (NGS) technologies. NGS experiments are based on the massive parallel sequencing of short sequences (reads). The mapping of these reads onto a reference genome (when available) generates counts data ($Y_t$) spatially organized (in 1D) along the genome (at position $X_t$). These technologies have revolutionized the perspectives of many fields in molecular biology, and among many applications, one is to get a local quantification of DNA or of a given DNA-related molecule (like transcription factors for instance with chIP-Seq experiments, \cite{F12}). This technology has recently been applied to the identification of replication origins along the human genome. Replication is the process by which a genome is duplicated into two copies. This process is tightly regulated in time and space so that the duplication process takes place in the highly regulated cell cycle. The human genome is replicated at many different starting points called origins of replication, that are loci along the genome at which the replication starts. Until very recently, the number of such origins remained controversial, and thanks to the application of NGS technologies, first estimates of this number could be obtained. The signal is made of counts along the human genome such that reads accumulations indicate an origin activity (see \cite{PCA14}). Scan statistics were first applied to these data, to detect significant local enrichments reads accumulation, but there is currently no consensus on the best method to analyze such data. Here we propose to use the Poisson functional regression to estimate the intensity function of the data on a portion of the human chromosomes X and 20. Half-fold cross-validation was used to select the appropriate dictionary between Daubechies, Fourier, Haar (and their combinations), and our theoretical weights were used to calibrate the Lasso (Figure \ref{Fig:Ori}). Our results are very promising as the sparse dictionary approach is very efficient for denoising (Chromosome X, Figure \ref{Fig:chrX}) and produces null intensities when the signal is low (higher specificity). Another aspect of our method is that it seems to be more powerful in the identification of peaks that are more precise (Chromosome 20, positions 0.20 and 0.25Mb, Figure \ref{Fig:chr20}), which indicates that the dictionary approach may be more sensitive to detect peaks. Given the spread of NGS data and the importance of peak detection in the analysis process, for chIP-Seq \cite{F12}, FAIRE-Seq \cite{TRH12}, OriSeq \cite{PCA14}, our preliminary results suggest that the sparse dictionary approach will be a very promising framework for the analysis of such data.

\begin{figure}
\centering
\subfloat[Chromosome 20]{
\includegraphics[scale=0.35]{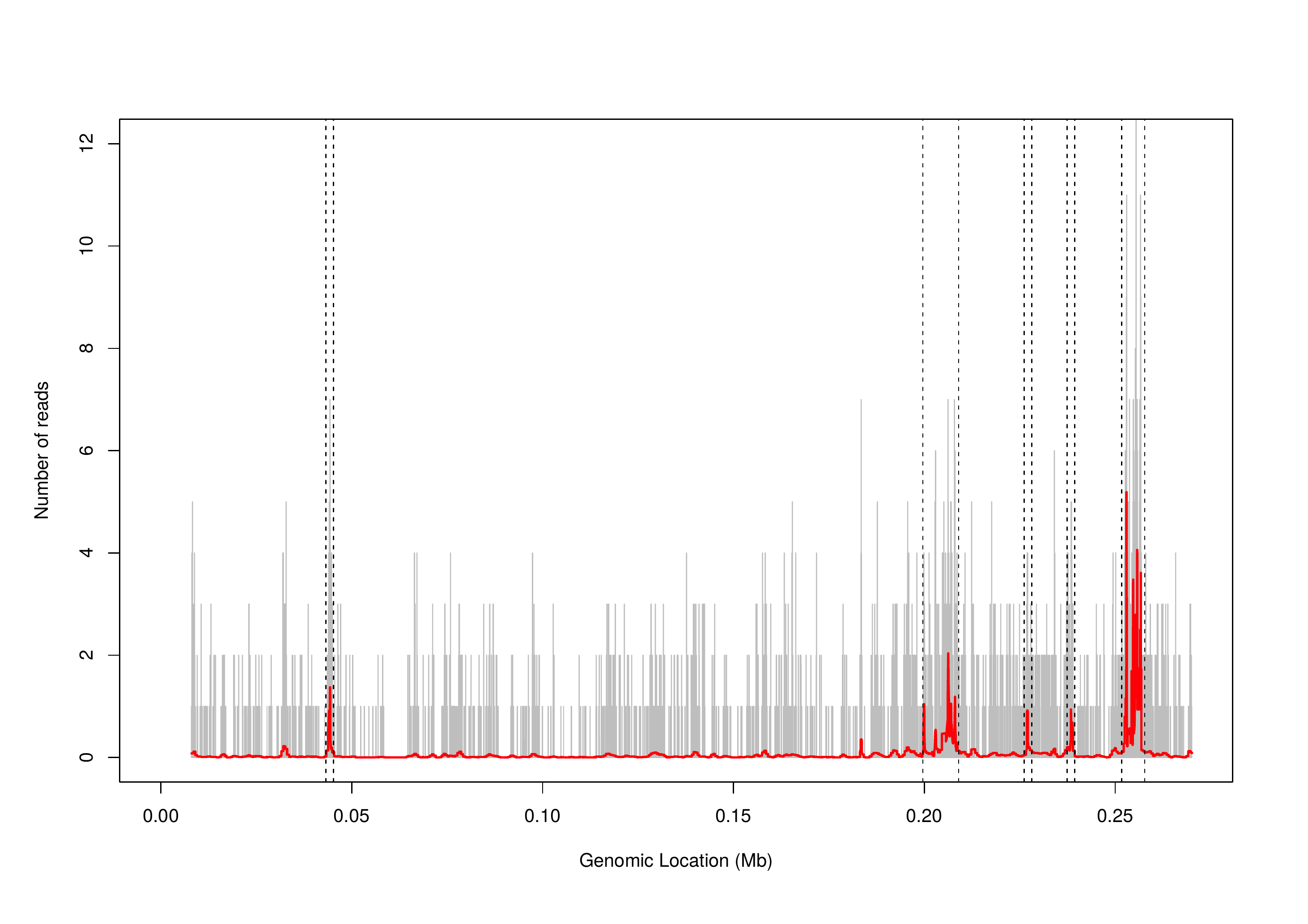}
\hfill
\label{Fig:chr20}}\\
\subfloat[Chromosome X]{
\includegraphics[scale=0.35]{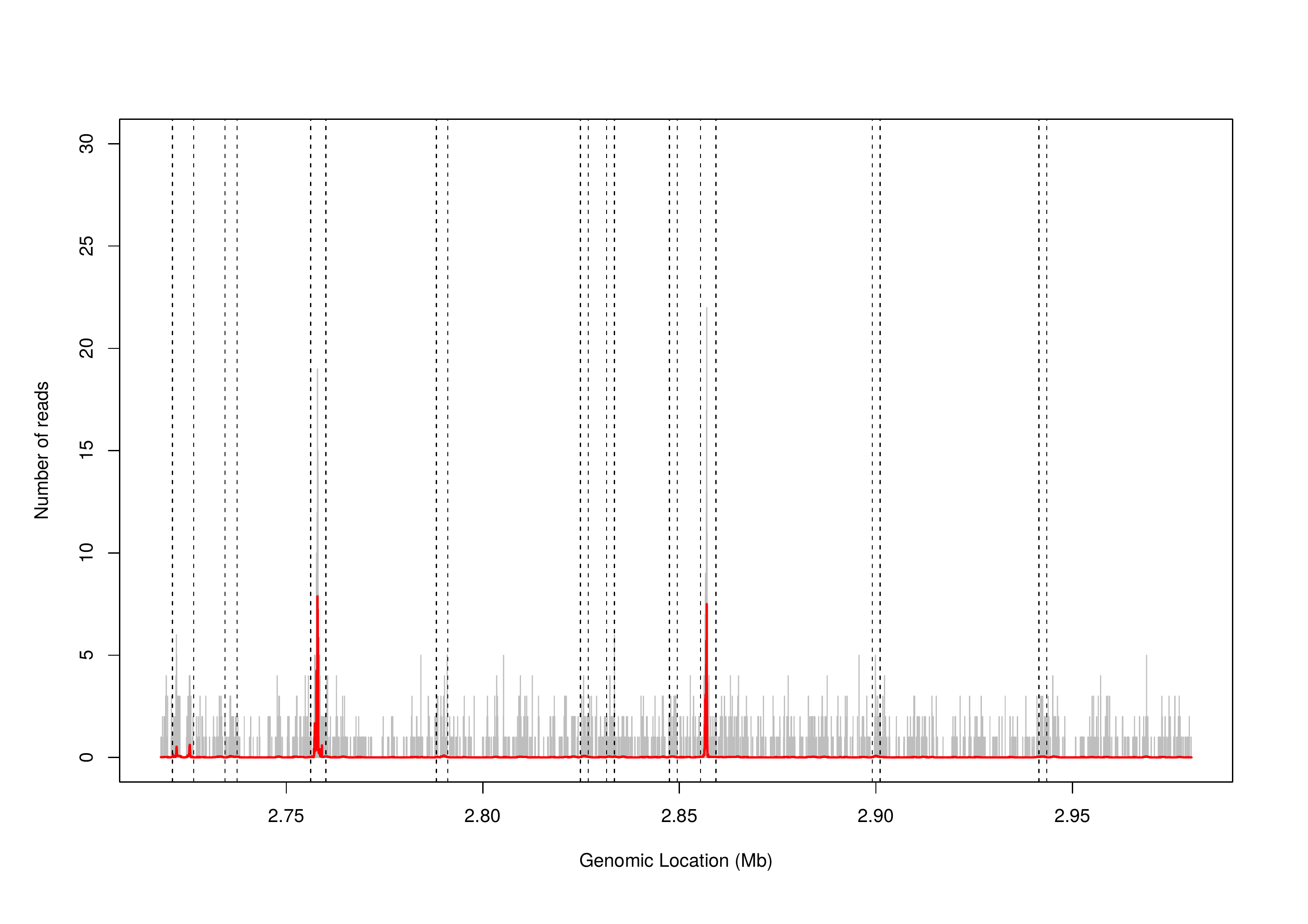}
\hfill
\label{Fig:chrX}}
\caption{Estimation of the intensity function of Ori-Seq data (chromosomes 20 \ref{Fig:chr20} and  X \ref{Fig:chrX}). Grey bars indicate the number of reads that match genomic positions (x-axis, in MegaBases). The red line corresponds to the estimated intensity function, and vertical dotted lines stand for the detected origins by scanning statistics.\label{Fig:Ori}}
\end{figure}

\newpage
\section{Proofs}\label{sec_proofs}
\subsection{Proof of Theorem \ref{calibration}}\label{sec:proof1}
We denote  by $\mu$ the Lebesgue measure on $\R^d$ and we introduce a partition of the set $[0,1]^d$ denoted $\cup_{i=1}^nS_i$ so that for any $i=1,\ldots,n$, $X_i\in S_i$ and $\mu(S_i)>0$. Let $h$ the function defined for any $t\in[0,1]^d$ by
$$h(t)=\sum_{i=1}^n\frac{f_0(X_i)}{\mu(S_i)}1_{S_i}(t).$$
Finally, we introduce $N$ the Poisson process on $[0,1]^d$ with intensity $h$ (see \cite{Kingman}). Therefore, for any $i=1,\ldots,n$, $N(S_i)$ is a Poisson variable with parameter $\int _{S_i} h(t)dt=f_0(X_i)$ and since $\cup_{i=1}^nS_i$ is a partition of $[0,1]^d$, $(N(S_1),\ldots,N(S_n))$ has the same distribution as $(Y_1,\ldots,Y_n)$. We observe that if for any $j=1,\ldots,p$, 
$$\widetilde \p_j(t)=\sum_{i=1}^n \p_j(X_i) 1_{S_i}(t),$$
then
$$\int \widetilde \p_j(t)dN(t)\sim\sum_{i=1}^n \p_j(X_i)Y_i=\bA_j^T\bY.$$
We use the following exponential inequality (see Inequality (5.2) of \cite{ptrf}). If $g$ is bounded, for any $u>0$,
  \begin{equation}
  \label{invformben}
 \P\left(\int g(x) (dN(x)-h(x)dx) \geq
    \sqrt{2u\int g^2(x) h(x)dx}+\frac{u}{3}\norm{g}_\infty\right)\leq\exp(-u).
 \end{equation}
By taking successively $g=\widetilde\p_j$ and $g=-\widetilde\p_j$, we obtain
$$\P\Bigg(|\bA_j^T(\bY-\E[\bY])|\geq \sqrt{2u\int\widetilde\p_j^2(x)h(x)dx}+\frac{u}{3}\|\widetilde\p_j\|_\infty\Bigg)\leq 2e^{-u}.$$
Since
$$\int\widetilde\p_j^2(x)h(x)dx = \sum_{i=1}^n \p_j^2(X_i) f_0(X_i)= V_j,
$$
we obtain
\begin{equation}\label{lemme_part_1}
\P\Bigg(|\bA_j^T(\bY-\E[\bY])|\geq \sqrt{2uV_j}+\frac{u}{3}\|\widetilde\p_j\|_\infty\Bigg)\leq 2e^{-u}.
\end{equation}
To control $V_j$, we use \eqref{invformben} with $g=-\widetilde\p_j^2$ and we have:
$$\P\Bigg(V_j - \widehat V_j \geq \sqrt{2u\int\widetilde\p_j^4(t)h(t)dt}+\frac{u}{3}\|\widetilde\p_j\|_{\infty}^2\Bigg)\leq e^{-u}.$$
We observe that 
$$\int\widetilde\p_j^4(t)h(t)dt \leq \|\widetilde\p_j\|_{\infty}^2\int\widetilde\p_j^2(t)h(t)dt=\|\widetilde\p_j\|_{\infty}^2 V_j.$$
Setting $v_j = u\|\widetilde\p_j\|_{\infty}^2,$ we have:
$$\P\Bigg(V_j-\sqrt{2v_jV_j}-\frac{v_j}{3} -\widehat V_j \geq 0\Bigg)\leq e^{-u}.$$
Let $\alpha_j = \sqrt{\widehat V_j + \frac{5}{6}v_j}+\sqrt{\frac{v_j}{2}},$
such that $\alpha_j$ is the positive solution to $\alpha_j^2-\sqrt{2v_j}\alpha_j-(\widehat V_j+\frac{v_j}{3})=0.$
Then
\begin{equation}\label{lemme_part_2}
\P\Big(V_j\geq \alpha_j^2\Big) = \P\Big(\sqrt{V_j}\geq\alpha_j\Big)\leq e^{-u}.
\end{equation}
We choose $u = \gamma \log p$ and observe that $\alpha_j^2 \leq \widetilde V_j.$ Then, by combining (\ref{lemme_part_1}) and (\ref{lemme_part_2}), we have
$$\P\Bigg(|\bA_j^T(\bY-\E[\bY])|\geq\sqrt{2\gamma\log p\widetilde V_j}+\frac{\gamma \log p}{3}\|\widetilde\p_j\|_{\infty}\Bigg)\leq {3\over p^\gamma}.$$
As $\|\widetilde\p_j\|_{\infty}= \max_i |\p_j(X_i)|$, the theorem follows. \hfill$\square$
\begin{Rem}\label{order}
By slightly extending previous computations, we easily show that for $u>0$,
$$\P\Bigg(|V_j - \widehat V_j| \geq \sqrt{2uV_j\|\widetilde\p_j\|_{\infty}^2}+\frac{u}{3}\|\widetilde\p_j\|_{\infty}^2\Bigg)\leq 2e^{-u},$$
which leads to 
$$\P\Bigg(|V_j - \widehat V_j| \geq \frac{V_j}{2}+\frac{4\gamma\log p}{3}\|\widetilde\p_j\|_{\infty}^2\Bigg)\leq \frac{2}{p^\gamma}.$$
\end{Rem}
\subsection{Proof of Theorem \ref{theo_1_plus}}
For each $k \in \{1,\ldots,K\}$, we recall that $b_k^i = \sqrt{\sum_{j \in G_k}\p_j^2(X_i)}$, so $b_k^i = \|\bA_{G_k}^T\be_i\|_2$, where $\be_i$ is the vector whose $i$-th coordinate is equal to 1 and all others to 0. We first state the following lemma:
\begin{lem}\label{theo_1}
 Let $k$ be fixed. Assume that there exists some $M>0$ such that $\forall\, x, |f_0(x)|\leq M$.\\
Assume further that there exists some $c_k \geq 0$ such that $\forall\, \by \in \R^n, \|\bA_{G_k}\bA_{G_k}^T\by\|_2 \leq c_k \|\bA_{G_k}^T\by\|_2$.\\
 Then, $\forall\, x>0, \forall\, \e>0$,
 \begin{equation}\label{theoreme}
  \P\Bigg(\|\bA_{G_k}^T(\bY-\E[\bY])\|_2 \geq (1+\e)\sqrt{\sum_{j\in G_k} V_j} +x \Bigg) \leq \exp\Bigg(\frac{x}{b_k} - \Big(\frac{x}{b_k} + \frac{D_k^\e}{b_k^2}\Big)\log\Big(1 + \frac{b_kx}{D_k^\e}\Big)\Bigg),\nonumber
 \end{equation}
 where $D_k^\e = 8Mc_k^2 + \frac{2}{\e^2}b_k^2$.
\end{lem}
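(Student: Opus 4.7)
The plan is to proceed in the same spirit as the proof of Theorem~\ref{calibration}, but this time representing the whole vector $\bA_{G_k}^{T}(\bY-\E[\bY])$ as a vector-valued centered Poisson integral, and then applying a concentration inequality for infinitely divisible vectors (the HMRB inequality cited just before the statement) instead of the scalar Poisson concentration bound \eqref{invformben}.

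First I would reuse the construction from Section~\ref{sec:proof1}: introduce the partition $(S_i)_{i}$ of $[0,1]^d$, the intensity $h(t)=\sum_i f_0(X_i)/\mu(S_i)\indic_{S_i}(t)$, the Poisson process $N$ with intensity $h$, and the functions $\widetilde\p_j(t)=\sum_i\p_j(X_i)\indic_{S_i}(t)$. With the vector-valued map $\widetilde\bPhi(t)=(\widetilde\p_j(t))_{j\in G_k}$, one gets the distributional identity
\begin{equation*}
\bA_{G_k}^T(\bY-\E[\bY]) \stackrel{d}{=} \int \widetilde\bPhi(t)\bigl(dN(t)-h(t)\,dt\bigr),
\end{equation*}
which is an infinitely divisible $\R^{|G_k|}$-valued random vector, and its Euclidean norm can be written as the supremum of Poisson functionals
\begin{equation*}
Z_k:=\|\bA_{G_k}^T(\bY-\E[\bY])\|_2=\sup_{u\in\mathcal U}\int\langle u,\widetilde\bPhi(t)\rangle\bigl(dN(t)-h(t)\,dt\bigr),
\end{equation*}
with $\mathcal U$ a countable dense subset of the unit sphere of $\R^{|G_k|}$.

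Next I would plug this representation into the HMRB concentration inequality for (suprema of) centered Poisson integrals, which yields a Bennett-type bound of the form $\P(Z_k\geq(1+\e)\E Z_k+x)\leq\exp\bigl(\tfrac{x}{b}-(\tfrac{x}{b}+\tfrac{D}{b^{2}})\log(1+\tfrac{bx}{D})\bigr)$, where $b$ is the sup-norm of the test functions and $D$ is a parameter combining a weak variance and an $\e$-dependent slack. Two computations then identify the parameters. For the sup-norm, $\sup_{\|u\|_2=1,\,t}|\langle u,\widetilde\bPhi(t)\rangle|=\max_i\|\bA_{G_k}^T\be_i\|_2=b_k$. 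For the weak variance, I would exploit the fact that $\bA_{G_k}^T(\bY-\E[\bY])$ lives in $\mathrm{range}(\bA_{G_k}^T)$, so the supremum may be restricted to unit vectors $u$ in this range; for such $u$ the hypothesis on $c_k$ gives $\|\bA_{G_k}u\|_2\leq c_k$, and therefore $\Var\langle u,\bA_{G_k}^T\bY\rangle=\sum_i f_0(X_i)(\bA_{G_k}u)_i^{2}\leq M\|\bA_{G_k}u\|_2^{2}\leq Mc_k^{2}$, accounting for the $8Mc_k^{2}$ contribution to $D_k^\e$. The extra $\tfrac{2}{\e^{2}}b_k^{2}$ term will arise when turning the concentration around $\E Z_k$ into one around $(1+\e)\sqrt{\sum_{j\in G_k}V_j}$: Jensen's inequality gives $\E Z_k\leq\sqrt{\E Z_k^{2}}=\sqrt{\sum_{j\in G_k}V_j}$, and absorbing the ``Efron--Stein'' gap $(1+\e)\E Z_k$ vs $(1+\e)\sqrt{\sum V_j}$ into the variance parameter produces exactly the $\e$-dependent correction.

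The main technical obstacle is the faithful invocation of the HMRB concentration bound for infinitely divisible vectors with the right parameters: the inequality must be applied to the supremum over the restricted unit ball $\mathcal U\cap\mathrm{range}(\bA_{G_k}^T)$ so that the variance bound uses $c_k^{2}$ (and not the cruder $|||\bA_{G_k}^T\bA_{G_k}|||$), and the numerical constants ($8$ and $2/\e^{2}$) must be tracked carefully through the $\e$-parametrized reduction from $\E Z_k$ to $\sqrt{\sum V_j}$. Once these bookkeeping steps are done, plugging $b=b_k$ and $D=D_k^\e$ into the Bennett function gives precisely the right-hand side of \eqref{theoreme}.
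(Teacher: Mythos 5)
Your proposal hits some correct points (the identification of $b_k$ as the coordinate Lipschitz constant, and the observation that $\Var\langle u,\bA_{G_k}^T\bY\rangle\leq M\|\bA_{G_k}u\|_2^2\leq Mc_k^2$ for unit vectors $u$ in the range of $\bA_{G_k}^T$ — which is indeed how $c_k$ enters), but its central step is a genuine gap: the ``Bennett-type bound of the form $\P(Z_k\geq(1+\e)\E Z_k+x)\leq\exp\bigl(\tfrac{x}{b}-(\tfrac{x}{b}+\tfrac{D}{b^2})\log(1+\tfrac{bx}{D})\bigr)$'' with a weak-variance parameter that you invoke is not an off-the-shelf result of \cite{HMRB}; it is essentially the lemma itself, and producing it is where all the work lies. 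Corollary 1 of \cite{HMRB} gives concentration of $f(\bY-\E[\bY])$ around its mean for coordinate-wise Lipschitz $f$, with rate governed by $h_f(t)=\sup_{\by}\sum_i f_0(X_i)\,(f(\by+\be_i)-f(\by))^2\,(e^{tb_k^i}-1)/b_k^i$. If you apply this to the norm itself (equivalently, to your supremum representation), the only increment bound valid for all $\by$ is $|f(\by+\be_i)-f(\by)|\leq b_k^i$ (take $\by=\zero$), so the variance proxy degrades to $\sum_i f_0(X_i)(b_k^i)^2=\sum_{j\in G_k}V_j$: no $c_k$ appears, and the resulting deviation is of order $\sqrt{u\sum_{j\in G_k}V_j}$, which is useless here because it is comparable to the main term. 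The missing idea — the heart of the paper's proof — is to apply the HMRB corollary to the \emph{truncated} function $f(\by)=\bigl(\|\bA_{G_k}^T\by\|_2-E\bigr)_+$ with $E=\e\sqrt{\sum_{j\in G_k}V_j}$: since increments of this $f$ vanish unless one of the two norms exceeds $E$, one gets the refined bound $|f(\by+\be_i)-f(\by)|\leq 2|\langle\bA_{G_k}^T\be_i,\bA_{G_k}^T\by\rangle|/\|\bA_{G_k}^T\by\|_2+(b_k^i)^2/E$, and squaring and summing this (using $|f_0|\leq M$ and the $c_k$ hypothesis) yields $h_f(t)\leq D_k^\e\,(e^{tb_k}-1)/b_k$ with exactly $D_k^\e=8Mc_k^2+2b_k^2/\e^2$. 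Your range-restriction trick only becomes usable after this truncation.

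Relatedly, your accounting of the $\tfrac{2}{\e^2}b_k^2$ term is wrong. You claim it arises from converting concentration around $\E Z_k$ into concentration around $(1+\e)\sqrt{\sum_{j\in G_k}V_j}$; but Jensen's inequality $\E Z_k\leq\sqrt{\sum_{j\in G_k}V_j}$ goes in the favorable direction, so that conversion only enlarges the centering and costs nothing. In the actual proof, the $(1+\e)$ inflation and the $\e^{-2}$ term have a single common origin, the truncation level $E=\e\sqrt{\sum_{j\in G_k}V_j}$: the centering is $E+\E\|\bA_{G_k}^T(\bY-\E[\bY])\|_2\leq(1+\e)\sqrt{\sum_{j\in G_k}V_j}$, while the $(b_k^i)^2/E$ part of the increment bound is what squares into $2b_k^2/\e^2$. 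Note also that if you instead invoked a genuine Talagrand-type inequality for suprema of Poisson integrals (e.g.\ from \cite{ptrf}), you would get a deviation of the form $(1+\e)\E Z_k+\sqrt{2\kappa u Mc_k^2}+\kappa(\e)\,u\,b_k$ with $\kappa(\e)$ of order $1/\e$ — a term linear in $u$ times $b_k/\e$, rather than the $\sqrt{u}\,b_k/\e$ implicit in $\sqrt{u D_k^\e}$ — so it would not deliver the lemma as stated. A minor final remark: the Poisson-process embedding of Section~\ref{sec:proof1} is unnecessary here; the paper applies \cite{HMRB} directly to $\bY-\E[\bY]\in\R^n$, whose components are independent with L\'evy measures $f_0(X_i)\delta_1$.
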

\begin{proof}
 With $k \in \{1,\ldots,K\}$ being fixed, we define $f : \R^n \to \R$ by $f(\by) = \Big(\|\bA_{G_k}^T\by\|_2 - E\Big)_+$, where $E>0$ is a constant chosen later. We use Corollary 1 from~\cite{HMRB}, applied to the infinitely divisible vector $\bY-\E[\bY] \in \R^n$, whose components are independent, and to $f$. First note that for any $t>0$,
\begin{eqnarray*}
 \E e^{tb_k^i|Y_i-\E Y_i|} &\leq& \E e^{tb_k^i(Y_i+f_0(X_i))}\\
&=& \exp\Big(f_0(X_i)(e^{tb_k^i}+tb_k^i-1)\Big)<\infty.
\end{eqnarray*}
Furthermore, for any $i \in \{1,...,n\}$, any $\by\in\R^n$ and any $u \in \R$,
\begin{eqnarray*}
 |f(\by+u\be_i) - f(\by)| &\leq& \Big|\|\bA_{G_k}^T(\by+u\be_i)\|_2 - \|\bA_{G_k}^T\by\|_2\Big|\\
&\leq& \|\bA_{G_k}^T(u\be_i)\|_2\\
&=& |u|b_k^i.
\end{eqnarray*}
Therefore, for all $x>0,$
$$\P\Big(f(\bY-\E[\bY]) - \E[f(\bY-\E[\bY])] \geq x\Big) \leq \exp\Big(-\int_0^x h_f^{-1}(s)ds\Big),$$
where $h_f$ is defined for all $t>0$ by
$$h_f(t) = \sup_{\by \in \R^n} \sum_{i=1}^n \int_\R |f(\by+u\be_i) - f(\by)|^2 \frac{e^{tb_k^i|u|}-1}{b_k^i|u|}\widetilde\nu_i(du)$$
and $\widetilde\nu_i$ is the L\'evy measure associated with $Y_i-\E[Y_i]$.
It is easy to show that $\widetilde\nu_i = f_0(X_i)\delta_1$, and so
$$h_f(t) = \sup_{\by\in \R^n} \sum_{i=1}^n f_0(X_i) \Big(f(\by+\be_i) - f(\by)\Big)^2 \frac{e^{tb_k^i}-1}{b_k^i}.$$
 Furthermore, writing $A_i = \Big\{\|\bA_{G_k}^T(\by+\be_i)\|_2 \geq E \text{ or } \|\bA_{G_k}^T\by\|_2 \geq E\Big\}$, we have
\begin{eqnarray*}
 |f(\by+\be_i) - f(\by)| &\leq& \Big|\|\bA_{G_k}^T(\by+\be_i)\|_2 - \|\bA_{G_k}^T\by\|_2\Big| 1_{A_i}\\
&=& \frac{1_{A_i}\Big|\|\bA_{G_k}^T(\by+\be_i)\|_2^2 - \|\bA_{G_k}^T\by\|_2^2\Big|}{\|\bA_{G_k}^T(\by+\be_i)\|_2 + \|\bA_{G_k}^T\by\|_2}\\
&=& \frac{1_{A_i}\Big|2 <\bA_{G_k}^T\be_i,\bA_{G_k}^T\by> + \|\bA_{G_k}^T\be_i\|_2^2\Big|}{\|\bA_{G_k}^T(\by+\be_i)\|_2 + \|\bA_{G_k}^T\by\|_2}\\
&\leq& 2\frac{\Big|<\bA_{G_k}^T\be_i,\bA_{G_k}^T\by>\Big|}{\|\bA_{G_k}^T\by\|_2}+\frac{\|\bA_{G_k}^T\be_i\|_2^2}{E},
\end{eqnarray*}
with $<\cdot,\cdot>$ the usual scalar product. We now have
$$\Big(f(\by+\be_i)-f(\by)\Big)^2 \leq 8\frac{<\bA_{G_k}^T\be_i,\bA_{G_k}^T\by>^2}{\|\bA_{G_k}^T\by\|_2^2} + 2\frac{\|\bA_{G_k}^T\be_i\|_2^4}{E^2}.$$
The first term can be rewritten as $8\frac{<\be_i,\bA_{G_k}\bA_{G_k}^T\by>^2}{\|\bA_{G_k}^T\by\|_2^2}$ and the second one is equal to $2\frac{{b_k^i}^4}{E^2}$, so we can now bound $h_f(t)$ as follows.
\begin{eqnarray*}
 h_f(t) &\leq& \sup_\by \sum_i f_0(X_i)\frac{e^{tb_k^i}-1}{b_k^i}\Bigg(8\frac{<\be_i,\bA_{G_k}\bA_{G_k}^T\by>^2}{\|\bA_{G_k}^T\by\|_2^2} + 2\frac{{b_k^i}^4}{E^2}\Bigg)\\
&\leq& \frac{e^{tb_k}-1}{b_k} \sup_\by \Bigg(8M\frac{\|\bA_{G_k}\bA_{G_k}^T\by\|_2^2}{\|\bA_{G_k}^T\by\|_2^2} + \frac{2}{E^2}\sum_i f_0(X_i){b_k^i}^4\Bigg)\\
&\leq& \frac{e^{tb_k}-1}{b_k} \Bigg(8Mc_k^2 + \frac{2}{E^2}\sum_i f_0(X_i){b_k^i}^4\Bigg).
\end{eqnarray*}
Now, we set 
$$E = \e\sqrt{\sum_{j \in G_k} V_j}.$$
So we have:
\begin{eqnarray*}
E^2 
&=& \e^2 \sum_{j\in G_k} \sum_{i=1}^n f_0(X_i) \p_j^2(X_i)\\
&=& \e^2 \sum_{i=1}^n f_0(X_i) \sum_{j\in G_k} \p_j^2(X_i)\\
&=& \e^2 \sum_{i=1}^n f_0(X_i) {b_k^i}^2.
\end{eqnarray*}
Thus, we can finally bound the function $h_f$ by the increasing function $h$ defined by
$$h(t) = D_k^\e\frac{e^{tb_k}-1}{b_k},$$
with $D_k^\e = 8Mc_k^2 + \frac{2b_k^2}{\e^2}$. Therefore,
\begin{eqnarray*}
 \exp\Big(-\int_0^x h_f^{-1}(s)ds\Big) &\leq& \exp\Big(-\int_0^x h^{-1}(s)ds\Big)\\
&=& \exp\Bigg(\frac{x}{b_k} - \Big(\frac{x}{b_k} + \frac{D_k^\e}{b_k^2}\Big)\log\Big(1+\frac{b_kx}{D_k^\e}\Big)\Bigg).
\end{eqnarray*}
Now,
\begin{eqnarray*}
 f(\bY-\E[\bY]) - \E[ f(\bY-\E[\bY])] &=& \Big(\|\bA_{G_k}^T(\bY-\E[\bY])\|_2 - E\Big)_+ - \E\Big(\|\bA_{G_k}^T(\bY-\E[\bY])\|_2 - E\Big)_+\\
&\geq& \|\bA_{G_k}^T(\bY-\E[\bY])\|_2 - E - \E\|\bA_{G_k}^T(\bY-\E[\bY])\|_2.
\end{eqnarray*}
Furthermore, by Jensen's inequality, we have 
\begin{eqnarray*}
 \E\|\bA_{G_k}^T(\bY-\E[\bY])\|_2 &\leq& \sqrt{\E\|\bA_{G_k}^T(\bY-\E[\bY])\|_2^2}\\
&=& \sqrt{\sum_{j \in G_k} \E[(\bA_j^T(\bY-\E \bY))^2]}\\
&=& \sqrt{\sum_{j \in G_k} \Var(\bA_j^T\bY)}\\
&=& \sqrt{\sum_{j \in G_k} V_j}.
\end{eqnarray*}
Recalling that $E = \e\sqrt{\sum_{j \in G_k} V_j}$, we thus have
$$\P\Big(f(\bY-\E[\bY]) - \E f(\bY-\E[\bY]) \geq x\Big) \geq \P\Big(\|\bA_{G_k}^T(\bY-\E[\bY])\|_2 - (1+\e)\sqrt{\sum_{j\in G_k}V_j}\geq x\Big),$$
which concludes the proof.
\end{proof}
We apply Lemma~\ref{theo_1} with 
$$\e=\frac{1}{2\sqrt{2\gamma\log p}}\quad\mbox{ and }\quad x=2 \sqrt{\gamma\log p D_k^\e}.$$
Then,
\begin{eqnarray*}
 \frac{b_kx}{D_k^\e}&=& \frac{2 b_k\sqrt{\gamma\log p}}{\sqrt{D_k^\e}}\\
 &=&\frac{2 b_k\sqrt{\gamma\log p}}{\sqrt{8Mc_k^2 + \frac{2b_k^2}{\e^2}}}\\
 &\leq&\e\sqrt{2\gamma\log p}=\frac{1}{2}.
\end{eqnarray*}
Finally, using the fact that $\log(1+u) \geq u - \frac{u^2}{2}$, we have:
\begin{eqnarray*}
 \exp\Bigg(\frac{x}{b_k} - \Big(\frac{x}{b_k} + \frac{D_k^\e}{b_k^2}\Big)\log\Big(1 + \frac{b_kx}{D_k^\e}\Big)\Bigg) &\leq& \exp\Bigg(\frac{x}{b_k} - \Big(\frac{x}{b_k} + \frac{D_k^\e}{b_k^2}\Big)\Big(\frac{b_k x}{D_k^\e} - \frac{b_k^2x^2}{2{D_k^\e}^2}\Big)\Bigg)\\
&=& \exp\Bigg(\frac{-x^2}{2D_k^\e}+\frac{b_kx^3}{2{D_k^\e}^2} \Bigg)\\
&=& \exp\Bigg(\frac{-x^2}{2D_k^\e}\Big(1-\frac{b_kx}{D_k^\e}\Big)\Bigg)\\
&\leq&\exp\Big(\frac{-x^2}{4D_k^\e}\Big) = \frac{1}{p^\gamma}.
\end{eqnarray*}
We obtain
$$ \P\Bigg(\|\bA_{G_k}^T(\bY-\E[\bY])\|_2 \geq (1+\e)\sqrt{\sum_{j\in G_k} V_j} + 2 \sqrt{\gamma\log p D_k^\e}\Bigg) \leq \frac{1}{p^\gamma}.$$
We control $V_j$ as in the proof of Theorem \ref{calibration}, but we take $u = \gamma\log p + \log |G_k|$. The analog of (\ref{lemme_part_2}) is
$$\P\Big(V_j > \widetilde V_j^g \Big) \leq e^{-u}= \frac{1}{|G_k|p^\gamma}$$
and thus
\begin{eqnarray*}
 \P\Big( \exists\, j \in G_k, \,V_j > \widetilde V_j^g\Big) &\leq& \frac{1}{p^\gamma}.
\end{eqnarray*}
This concludes the proof of Theorem \ref{theo_1_plus}. \hfill$\square$
\subsection{Proof of Proposition~\ref{prop}}
For the first point, we write:
$$  \|\bA_{G_k}\bA_{G_k}^T\bx\|_2^2 = \sum_{l=1}^n \Bigg(\sum_{j \in G_k} \p_j(X_l)\sum_{i=1}^n \p_j(X_i)x_i\Bigg)^2.$$
Then, we apply the Cauchy-Schwarz inequality:
\begin{eqnarray*}
 \|\bA_{G_k}\bA_{G_k}^T\bx\|_2^2 &\leq& \sum_{l=1}^n \Bigg(\sum_{j \in G_k} \p_j^2(X_l)\Bigg)\Bigg(\sum_{j \in G_k}\Big(\sum_{i=1}^n \p_j(X_i)x_i\Big)^2\Bigg)\\
&=& \|\bA_{G_k}^T\bx\|_2^2 \sum_{l=1}^n \Bigg(\sum_{j \in G_k} \p_j^2(X_l)\Bigg)\\
&=&\|\bA_{G_k}^T\bx\|_2^2 \sum_{l=1}^n(b_k^l)^2\\
&\leq& n b_k^2\|\bA_{G_k}^T\bx\|_2^2,
\end{eqnarray*}
which proves the upper bound of (\ref{c_inf_b}).  For the lower bound, we just observe that for any $i=1,\ldots,n$, with $\be_i$ the vector whose $i$-th coordinate is equal to 1 and all others to 0,
\begin{eqnarray*}
 {b_k^i}^2 &=&\|\bA_{G_k}^T\be_i\|_2^2\\
&=& <\bA_{G_k}^T\be_i, \bA_{G_k}^T\be_i>\\
&=& <\be_i, \bA_{G_k}\bA_{G_k}^T\be_i>\\
&\leq& \|\be_i\|_2 \|\bA_{G_k}\bA_{G_k}^T\be_i\|_2\\
&\leq& c_k \|\bA_{G_k}^T\be_i\|_2\\
&=& c_k b_k^i,
\end{eqnarray*}
which obviously entails $b_k\leq c_k$.
For the last point, we observe that $$\|\bA_{G_k}^T\bx\|_2^2 = \sum_{j\in G_k} K_j^2,$$
where $K_j = \sum_{i=1}^n \p_j(X_i)x_i$.
By expressing $\|\bA_{G_k}\bA_{G_k}^T\bx\|_2^2$ with respect to the $K_j$'s, we obtain:
\begin{eqnarray*}
 \|\bA_{G_k}\bA_{G_k}^T\bx\|_2^2 &=& \sum_{l=1}^n \Bigg(\sum_{j \in G_k} \p_j(X_l)\sum_{i=1}^n \p_j(X_i)x_i\Bigg)^2\\
&=& \sum_{l=1}^n \sum_{j\in G_k} \p_j(X_l)\sum_{i=1}^n\p_j(X_i)x_i \sum_{j'\in G_k} \p_{j'}(X_l)\sum_{i'=1}^n\p_{j'}(X_{i'})x_{i'}\\
&=& \sum_{j\in G_k}\sum_{j'\in G_k}\sum_{l=1}^n\p_j(X_l)\p_{j'}(X_l)\sum_{i=1}^n\p_j(X_i)x_i\sum_{i'=1}^n\p_{j'}(X_{i'})x_{i'}\\
&=& \sum_{j\in G_k}\sum_{j'\in G_k}\sum_{l=1}^n\p_j(X_l)\p_{j'}(X_l)K_j K_{j'}\\
&\leq& \frac{1}{2}\sum_{j\in G_k}\sum_{j'\in G_k}\Big|\sum_{l=1}^n\p_j(X_l)\p_{j'}(X_l)\Big|(K_j^2+ K_{j'}^2)\\
&=&\sum_{j\in G_k}\sum_{j'\in G_k}\Big|\sum_{l=1}^n\p_j(X_l)\p_{j'}(X_l)\Big|K_j^2,
\end{eqnarray*}
from which we deduce (\ref{majock}). \hfill$\square$
\subsection{Proof of Theorem \ref{slow_io_gl}}
For any $\bbeta\in\R^p$, we have
\begin{eqnarray}
K(f_0,f_\bbeta) &=&  \sum_{i=1}^n f_0(X_i) \big(\log f_0(X_i) - \log f_\bbeta(X_i)\big) + f_\bbeta(X_i)-f_0(X_i)\nonumber\\
&=&  \sum_{i=1}^n Y_i \big(\log f_0(X_i) - \log f_\bbeta(X_i)\big) + f_\bbeta(X_i)-f_0(X_i)\nonumber\\
&&\hspace{1cm} + \sum_{i=1}^n (f_0(X_i)-Y_i)\big(\log f_0(X_i) - \log f_\bbeta(X_i)\big)\nonumber\\
&=&\log{\mathcal L}(f_0)-\log{\mathcal L}(f_\bbeta) + \sum_{i=1}^n (f_0(X_i)-Y_i)\big(\log f_0(X_i) - \log f_\bbeta(X_i)\big).\nonumber
\end{eqnarray}
Therefore, for all $\bbeta\in\R^p$,
\begin{eqnarray*}
K(f_0, \widehat f^{gL}) - K(f_0, f_\bbeta) &=& l(\bbeta) - l(\widehat\bbeta^{gL}) + \sum_{i=1}^n \big(f_0(X_i)-Y_i\big)\big(\log f_\bbeta(X_i) - \log \widehat f^{gL}(X_i)\big) \nonumber\\
&=&   l(\bbeta) - l(\widehat\bbeta^{gL}) + \sum_{i=1}^n \big(f_0(X_i)-Y_i\big)\sum_{j=1}^p (\beta_j - \widehat\beta_j^{gL})\p_j(X_i)\nonumber\\
&=&  l(\bbeta) - l(\widehat\bbeta^{gL})+ \sum_{j=1}^p (\widehat\beta_j^{gL}-\beta_j)\sum_{i=1}^n \p_j(X_i)(Y_i - f_0(X_i)).\nonumber\\
\end{eqnarray*}
Let us write $\eta_j = \sum_{i=1}^n \p_j(X_i)(Y_i - f_0(X_i)) = \bA_j^T(\bY-\E[\bY]).$ We have
\begin{equation}\label{ineg_preuve}
K(f_0, \widehat f^{gL}) = K(f_0, f_\bbeta) + l(\bbeta) - l(\widehat\bbeta^{gL})+ (\widehat\bbeta^{gL} - \bbeta)^T\etab.
\end{equation}
By definition of $\widehat\bbeta^{gL}$,
$$-l(\widehat\bbeta^{gL}) + \sum_{k=1}^K \lambda_k^g\|\widehat\bbeta^{gL}_{G_k}\|_2 \leq -l(\bbeta) + \sum_{k=1}^K \lambda_k^g\|\bbeta_{G_k}\|_2.$$
Furthermore, on $\Omega_g$,
\begin{eqnarray}
|(\widehat\bbeta^{gL}-\bbeta)^T\etab| &=& \Big|\sum_{j=1}^p (\widehat\beta_j^{gL} - \beta_j)(\bA_j^T(\bY-\E \bY))\Big|\nonumber\\
&\leq& \sum_{k=1}^K \sum_{j \in G_k} |\widehat\beta_j^{gL} - \beta_j||\bA_j^T(\bY-\E \bY)|\nonumber\\
&\leq& \sum_{k=1}^K \Big(\sum_{j \in G_k} (\widehat\beta_j^{gL} - \beta_j)^2\Big)^{1/2}\Big(\sum_{j \in G_k} (\bA_j^T(\bY-\E \bY))^2\Big)^{1/2}\nonumber\\
&=& \sum_{k=1}^K \|\widehat\bbeta_{G_k}^{gL}-\bbeta_{G_k}\|_2\|\bA_{G_k}^T(\bY-\E \bY)\|_2\nonumber\\
&\leq& \sum_{k=1}^K \lambda_k^g \|\widehat\bbeta_{G_k}^{gL}-\bbeta_{G_k}\|_2. \label{ineg_eta}
\end{eqnarray}
Therefore, for all $\bbeta\in\R^p$,
$$K(f_0,\widehat f^{gL}) \leq K(f_0, f_\bbeta) + \sum_{k=1}^K \lambda_k^g\Big(\|\widehat\bbeta_{G_k}^{gL} - \bbeta_{G_k}\|_2 - \|\widehat\bbeta_{G_k}^{gL}\|_2 + \|\bbeta_{G_k}\|_2  \Big),$$
from which we deduce (\ref{slow_oi_gLasso}). \hfill$\square$
\subsection{Proof of Theorem \ref{fast_io_gl}}
We start from Equality (\ref{ineg_preuve}) combined with Inequality \eqref{ineg_eta}. 
Then, we have that on $\Omega_g$, for any $\bbeta$,
$$K(f_0,\widehat f^{gL}) + (\alpha-1)\sum_{k=1}^K \lambda_k^g \|\widehat\bbeta_{G_k}^{gL}-\bbeta_{G_k}\|_2 \leq K(f_0,f_\bbeta) + \sum_{k=1}^K \alpha\lambda_k^g\Big(\|\widehat\bbeta_{G_k}^{gL}-\bbeta_{G_k}\|_2 - \|\widehat\bbeta_{G_k}^{gL}\|_2 + \|\bbeta_{G_k}\|_2 \Big).$$
On $J(\bbeta)^c$, $\|\widehat\bbeta_{G_k}^{gL}-\bbeta_{G_k}\|_2 - \|\widehat\bbeta_{G_k}^{gL}\|_2 + \|\bbeta_{G_k}\|_2 =0$ and
\begin{equation}\label{no_25}
K(f_0,\widehat f^{gL}) + (\alpha-1)\sum_{k=1}^K \lambda_k^g \|\widehat\bbeta_{G_k}^{gL}-\bbeta_{G_k}\|_2 \leq K(f_0,f_\bbeta) + 2\alpha\sum_{k\in J(\bbeta)} \lambda_k^g\|\widehat\bbeta_{G_k}^{gL}-\bbeta_{G_k}\|_2.
\end{equation}
By applying the Cauchy-Schwarz inequality we also have
\begin{equation}\label{no_26}
K(f_0,\widehat f^{gL}) + (\alpha-1)\sum_{k=1}^K \lambda_k^g \|\widehat\bbeta_{G_k}^{gL}-\bbeta_{G_k}\|_2 \leq K(f_0,f_\bbeta) + 2\alpha |J(\bbeta)|^{1/2} \Big( \sum_{k\in J(\bbeta)} (\lambda_k^g)^2 \|\widehat\bbeta_{G_k}^{gL}-\bbeta_{G_k}\|_2^2 \Big)^{1/2}.
\end{equation}
If we write $\bDelta = \bD(\widehat\bbeta^{gL}-\bbeta)$, where $\bD$ is a diagonal matrix with $D_{j,j}=\lambda_k^g$ if $j \in G_k$, then we can rewrite (\ref{no_25}) as
\begin{equation}\label{no_27}
K(f_0,\widehat f^{gL}) + (\alpha-1)\|\bDelta\|_{1,2} \leq K(f_0,f_\bbeta) + 2\alpha\|\bDelta_{J(\bbeta)}\|_{1,2}
\end{equation}
and we deduce from (\ref{no_26})
\begin{equation}\label{no_28}
K(f_0,\widehat f^{gL}) \leq K(f_0,f_\bbeta) + 2\alpha(|J(\bbeta)|)^{1/2} \|\bDelta_{J(\bbeta)}\|_2.
\end{equation}
Now, on the event $\big\{2\alpha\|\bDelta_{J(\bbeta)}\|_{1,2} \leq \e K(f_0,f_\bbeta) \big\}$, the theorem follows immediately from (\ref{no_27}). We now assume that $\e K(f_0,f_\bbeta) \leq 2\alpha\|\bDelta_{J(\bbeta)}\|_{1,2} $. Since $K$ is non-negative, we deduce from (\ref{no_27}) that
$$(\alpha-1)\|\bDelta\|_{1,2} \leq 2\alpha\Big(1+{1 \over \e}\Big)\|\bDelta_{J(\bbeta)}\|_{1,2},$$
$$(\alpha-1)\|\bDelta_{J(\bbeta)^c}\|_{1,2} \leq \Bigg(2\alpha\Big(1+{1 \over \e}\Big)-(\alpha-1)\Bigg)\|\bDelta_{J(\bbeta)}\|_{1,2}$$
and
$$\|\bDelta_{J(\bbeta)^c}\|_{1,2} \leq\Bigg(\frac{\alpha+1+2\alpha/\e}{\alpha-1}\Bigg)\|\bDelta_{J(\bbeta)}\|_{1,2}.$$
This yields the following inequality for the vector $\bD^{-1}\bDelta= (\widehat\bbeta^{gL}-\bbeta)$: 
$$\|(\widehat\bbeta^{gL}-\bbeta)_{J(\bbeta)^c}\|_{1,2} \leq{\max_k \lambda_k^g \over \min_k \lambda_k^g}\frac{\alpha+1+2\alpha/\e}{\alpha-1}\|(\widehat\bbeta^{gL}-\bbeta)_{J(\bbeta)}\|_{1,2}. $$
From Assumption 2 we have that, if $\bbeta$ is such that $|J(\bbeta)|\leq s$, then 
$$\|(\widehat\bbeta^{gL}-\bbeta)_{J(\bbeta)}\|_2 \leq {1\over\kappa_n}\big((\widehat\bbeta^{gL}-\bbeta)^T\bG(\widehat\bbeta^{gL}-\bbeta)\big)^{1/2}.$$
Since
$$\bG_{j,j'}= \sum_{i=1}^n \p_j(X_i)\p_{j'}(X_i)f_0(X_i),$$
by setting
$$u_i=\log f_\bbeta(X_i) - \log f_0(X_i)\quad\mbox{and}\quad \widehat u_i^{gL}= \log f_{\widehat\bbeta^{gL}}(X_i) - \log f_0(X_i),$$
we have
\begin{eqnarray*}
(\widehat\bbeta^{gL}-\bbeta)^T\bG(\widehat\bbeta^{gL}-\bbeta) &=& \sum_{j=1}^p \sum_{j'=1}^p(\widehat\beta_j^{gL}-\beta_j)(\widehat\beta_{j'}^{gL}-\beta_{j'})\bG_{j,j'}\\
&=& \sum_{i=1}^n f_0(X_i)\Big(\sum_{j=1}^p (\widehat\beta_j^{gL}-\beta_j) \p_j(X_i)\Big)^2\\
&=& \sum_{i=1}^n f_0(X_i)(\widehat u_i^{gL} - u_i)^2.
\end{eqnarray*}
We set $h(f_0,f_\bbeta)=\sum_{i=1}^n f_0(X_i)u_i^2$ and  $h(f_0,\widehat f^{gL})=\sum_{i=1}^n f_0(X_i)(\widehat u_i^{gL})^2$. From (\ref{no_28}) and since
\begin{eqnarray*}
\|\bDelta_{J(\bbeta)}\|_2 &\leq& (\max_k \lambda_k^g) \|(\widehat\bbeta^{gL}-\bbeta)_{J(\bbeta)}\|_2\\
&\leq& {\max_k \lambda_k^g \over \kappa_n} \big((\widehat\bbeta^{gL}-\bbeta)^T\bG(\widehat\bbeta^{gL}-\bbeta)\big)^{1/2},
\end{eqnarray*}
we have
$$K(f_0,\widehat f^{gL}) \leq K(f_0,f_\bbeta) + {2\alpha\over\kappa_n}|J(\bbeta)|^{1/2}(\max_k \lambda_k^g)\Big(\sqrt{h(f_0,\widehat f^{gL})}+ \sqrt{h(f_0,f_\bbeta)}\Big).$$
 To conclude, we use arguments similar to \cite{Lemler}. We recall them for the safe of completeness. To connect $h(f_0,f_\bbeta)$ to $K(f_0,f_\bbeta)$, we use Lemma 1 of \cite{bach} that is recalled now.
\begin{lem}\label{lemme_bach}
Let $g$ be a convex three times differentiable function $g : \R \to \R$ such that for all $t \in \R$, $|g'''(t)|\leq Sg''(t)$ for some $S\geq 0$. Then, for all $t\geq 0$,
$${g''(0) \over S^2} \phi(-St)\leq g(t)-g(0)-g'(0)t \leq {g''(0) \over S^2} \phi(St),$$
where $\phi(x) = e^x - x - 1$.
\end{lem}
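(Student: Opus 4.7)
The plan is to exploit the differential inequality $|g'''(t)| \leq S g''(t)$ to control the multiplicative growth of $g''$, and then integrate twice to recover bounds on $g$. Since $g$ is convex, $g''(t) \geq 0$. Assuming first that $g''$ is strictly positive everywhere, the hypothesis rewrites as the two-sided bound $-S \leq (\log g''(t))' \leq S$. Integrating between $0$ and $t$ for $t \geq 0$ then yields
$$g''(0)\, e^{-St} \leq g''(t) \leq g''(0)\, e^{St}.$$

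Next I would integrate once, using $g'(t) - g'(0) = \int_0^t g''(s)\,ds$, to obtain
$$\frac{g''(0)}{S}\bigl(1 - e^{-St}\bigr) \leq g'(t) - g'(0) \leq \frac{g''(0)}{S}\bigl(e^{St} - 1\bigr),$$
and then integrate once more, via $g(t) - g(0) - g'(0)t = \int_0^t \bigl(g'(s) - g'(0)\bigr)\,ds$, which produces
$$\frac{g''(0)}{S^2}\bigl(e^{-St} + St - 1\bigr) \leq g(t) - g(0) - g'(0)t \leq \frac{g''(0)}{S^2}\bigl(e^{St} - St - 1\bigr).$$
Since $\phi(x) = e^x - x - 1$, we have $\phi(St) = e^{St} - St - 1$ and $\phi(-St) = e^{-St} + St - 1$, so this is exactly the announced inequality.

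The only real subtlety is dealing with points where $g''$ vanishes, which obstructs the logarithmic-derivative step in the first display. I would resolve this by a standard perturbation argument: introduce $g_\delta(t) = g(t) + \tfrac{\delta}{2} t^2$ for $\delta > 0$, which satisfies $g_\delta'' = g'' + \delta \geq \delta > 0$ and $g_\delta''' = g'''$, so that $|g_\delta'''| = |g'''| \leq S g'' \leq S g_\delta''$; the strict-positivity argument then applies to $g_\delta$, and letting $\delta \to 0^+$ transfers the inequalities to $g$ by continuity of $g$, $g'$, and $g''$. This is the only real obstacle, since everything else is elementary one-variable calculus.
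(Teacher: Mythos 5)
Your proof is correct. One point of context: the paper does not actually prove this lemma --- it is recalled verbatim as Lemma 1 of the cited reference (Bach, 2010, \emph{Self-concordant analysis for logistic regression}), so there is no in-paper proof to compare against. Your argument is the standard self-concordance computation, essentially the one found in that reference: the hypothesis $|g'''|\leq Sg''$ is read as a two-sided bound on $(\log g'')'$, integrated once to get $g''(0)e^{-St}\leq g''(t)\leq g''(0)e^{St}$, and then integrated twice more, which produces exactly $\phi(\pm St)$. Your treatment of the degenerate points where $g''$ vanishes, via the perturbation $g_\delta = g + \tfrac{\delta}{2}t^2$ and $\delta\to 0^+$, is valid and careful (the hypothesis does transfer to $g_\delta$ with the same $S$, and the inequalities pass to the pointwise limit). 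An alternative, slightly slicker way to handle that step: from $|(g'')'|\leq Sg''$ a Gronwall argument shows that if $g''$ vanishes at one point it vanishes identically, in which case the claim is trivial; otherwise $g''>0$ everywhere and your main computation applies directly. The only remaining pedantic point, which the statement itself shares, is the case $S=0$, where the displayed bounds must be read as the limit $\phi(St)/S^2\to t^2/2$; there $g'''\equiv 0$ and the claim reduces to the exact Taylor identity $g(t)-g(0)-g'(0)t=\tfrac{1}{2}g''(0)t^2$.
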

Let $h$ be a real function. We set
$$G(h) = \sum_{i=1}^n \big(e^{h(X_i)}-f_0(X_i)h(X_i)\big)$$
and
$$g(t) = G(h+tk),$$
where $h$ and $k$ are functions and $t \in \R$. We have :
$$
g'(t)=\sum_{i=1}^n \big( k(X_i)e^{h(X_i)+tk(X_i)} - f_0(X_i)k(X_i)\big),$$
$$g''(t)= \sum_{i=1}^n \big( k^2(X_i)e^{h(X_i)+tk(X_i)}\big)$$
and
$$g'''(t)= \sum_{i=1}^n \big( k^3(X_i)e^{h(X_i)+tk(X_i)}\big).$$
Therefore $|g'''(t)| \leq Sg''(t)$ with $S=\max_i |k(X_i)|$.
We choose $h(X_i) = \log f_0(X_i)$ and $k(X_i) = u_i = \log f_\bbeta(X_i) - \log f_0(X_i)$ and we apply Lemma \ref{lemme_bach} to $g$ with $t=1$. Computations yield that $g(1) - g(0) = K(f_0,f_\bbeta)$, $g'(0) = 0$ and $g''(0) = \sum_{i=1}^n f_0(X_i) u_i^2 = h(f_0,f_\bbeta)$.
Therefore
$${\phi(-S) \over S^2} h(f_0,f_\bbeta) \leq K(f_0,f_\bbeta) \leq {\phi(S)\over S^2}h(f_0,f_\bbeta).$$
Finally, using Assumption 1, for $\bbeta \in \Gamma(\mu)$, $S=\max_i |u_i|\leq \mu$. Furthermore, $x\longrightarrow{\phi(x) \over x^2}$ is a nonnegative increasing function and therefore we have
$$\mu' h(f_0,f_\bbeta) \leq K(f_0,f_\bbeta) \leq \mu'' h(f_0,f_\bbeta),$$
where $\mu' = {\phi(-\mu)\over \mu^2}$ and $\mu'' = {\phi(\mu) \over \mu^2}.$
It follows that, for $\bbeta \in \Gamma(\mu)$,
$$K(f_0,\widehat f^{gL}) \leq K(f_0,f_\bbeta) + {2\alpha\over\kappa_n\sqrt{\mu'}}|J(\bbeta)|^{1/2}(\max_k \lambda_k^g)\Big(\sqrt{K(f_0,\widehat f^{gL})}+ \sqrt{K(f_0,f_\bbeta)}\Big).$$
We use twice the inequality $2uv \leq bu^2+{v^2 \over b}$ for any $b>0$, applied to $u={\alpha\over\kappa_n}\sqrt{|J(\bbeta)|}(\max_k \lambda_k^g)$ and $v$ being either $\sqrt{{1\over\mu'}K(f_0,\widehat f^{gL})}$ or $\sqrt{{1\over\mu'}K(f_0,f_\bbeta)}$.
We have
$$\Big(1-{1\over\mu' b}\Big)K(f_0,\widehat f^{gL}) \leq \Big(1+{1\over\mu' b}\Big)K(f_0,f_\bbeta) + 2b{\alpha^2|J(\bbeta)|\over\kappa_n^2}(\max_k \lambda_k^g)^2.$$
Finally,
$$K(f_0,\widehat f^{gL}) \leq \Big({\mu'b+1\over\mu' b -1}\Big)K(f_0,f_\bbeta) + 2{\mu'b^2\over \mu'b-1}{\alpha^2|J(\bbeta)|\over\kappa_n^2}(\max_k \lambda_k^g)^2.$$
We choose $b>1/\mu'$ such that ${\mu'b+1\over\mu'b-1} = 1+\e$ and we set $B(\e,\mu) = 2(1+\e)^{-1}{\mu'b^2\over \mu'b-1}$. Finally, we have, for any $\bbeta \in \Gamma(\mu)$ such that $|J(\bbeta)|\leq s$,
$$K(f_0,\widehat f^{gL}) \leq (1+\e)\Bigg(K(f_0,f_\bbeta) + B(\e,\mu){\alpha^2|J(\bbeta)|\over\kappa_n^2}(\max_k \lambda_k^g)^2\Bigg).$$
This completes the proof of Theorem \ref{fast_io_gl}.
 \hfill$\square$

\bigskip

\noindent{\bf Acknowledgements:} The research of St\'ephane Ivanoff  and Vincent Rivoirard is partly supported by the french Agence Nationale de la Recherche (ANR 2011 BS01 010 01 projet Calibration). The research of Franck Picard is partly supported by the ABS4NGS ANR project ANR-11-BINF-0001-06. 

\bibliographystyle{apalike}

\begin{thebibliography}{}

\bibitem[Anscombe, 1948]{anscombe}
Anscombe, F.~J. (1948).
\newblock The transformation of {P}oisson, binomial and negative-binomial data.
\newblock {\em Biometrika}, 35:246--254.

\bibitem[Bach, 2010]{bach}
Bach, F. (2010).
\newblock Self-concordant analysis for logistic regression.
\newblock {\em Electron. J. Stat.}, 4:384--414.

\bibitem[Bach, 2008]{Bach08}
Bach, F.~R. (2008).
\newblock Consistency of the group lasso and multiple kernel learning.
\newblock {\em J. Mach. Learn. Res.}, 9:1179--1225.

\bibitem[Bertin et~al., 2011]{blpv}
Bertin, K., Le~Pennec, E., and Rivoirard, V. (2011).
\newblock Adaptive {D}antzig density estimation.
\newblock {\em Ann. Inst. Henri Poincar\'e Probab. Stat.}, 47(1):43--74.

\bibitem[Besbeas et~al., 2004]{besbeas}
Besbeas, P., De~Feis, I., and Sapatinas, T. (2004).
\newblock A comparative simulation study of wavelet shrinkage estimators for
  {P}oisson counts.
\newblock {\em Intern. Statist. Review}, 72(2):209--237.

\bibitem[Bickel et~al., 2009]{BRT}
Bickel, P.~J., Ritov, Y., and Tsybakov, A.~B. (2009).
\newblock Simultaneous analysis of lasso and {D}antzig selector.
\newblock {\em Ann. Statist.}, 37(4):1705--1732.

\bibitem[Blazere et~al., 2014]{BLG14}
Blazere, M., Loubes, J.-M., and Gamboa, F. (2014).
\newblock Oracle inequalities for a group lasso procedure applied to
  generalized linear models in high dimension.
\newblock {\em IEEE Transactions on Information Theory}, 4(12):2303--2318.

\bibitem[Bradic et~al., 2011]{BFJ}
Bradic, J., Fan, J., and Jiang, J. (2011).
\newblock Regularization for {C}ox's proportional hazards model with
  {NP}-dimensionality.
\newblock {\em Ann. Statist.}, 39(6):3092--3120.

\bibitem[B{\"u}hlmann and van~de Geer, 2011]{bvdg}
B{\"u}hlmann, P. and van~de Geer, S. (2011).
\newblock {\em Statistics for high-dimensional data}.
\newblock Springer Series in Statistics. Springer, Heidelberg.
\newblock Methods, theory and applications.

\bibitem[Bunea et~al., 2007a]{BTW}
Bunea, F., Tsybakov, A., and Wegkamp, M. (2007a).
\newblock Sparsity oracle inequalities for the {L}asso.
\newblock {\em Electron. J. Stat.}, 1:169--194.

\bibitem[Bunea et~al., 2007b]{bunea2}
Bunea, F., Tsybakov, A.~B., and Wegkamp, M.~H. (2007b).
\newblock Aggregation for {G}aussian regression.
\newblock {\em Ann. Statist.}, 35(4):1674--1697.

\bibitem[Chen et~al., 2001]{chen2001}
Chen, S.~S., Donoho, D.~L., and Saunders, M.~A. (2001).
\newblock Atomic decomposition by basis pursuit.
\newblock {\em SIAM Rev.}, 43(1):129--159.
\newblock Reprinted from SIAM J. Sci. Comput. {{\bf{2}}0} (1998), no. 1, 33--61
  (electronic) [ MR1639094 (99h:94013)].

\bibitem[Chesneau and Hebiri, 2008]{CH08}
Chesneau, C. and Hebiri, M. (2008).
\newblock Some theoretical results on the grouped variables {L}asso.
\newblock {\em Math. Methods Statist.}, 17(4):317--326.

\bibitem[Chicken and Cai, 2005]{CC05}
Chicken, E. and Cai, T. (2005).
\newblock Block thresholding for density estimation: Local and global
  adaptivity.
\newblock {\em Journal of Multivariate Analysis}, 95:76--106.

\bibitem[Dalalyan et~al., 2013]{DHMS}
Dalalyan, A.~S., Hebiri, M., Meziani, K., and Salmon, J. (2013).
\newblock Learning heteroscedastic models by convex programming under group
  sparsity.
\newblock In {\em ICML}.

\bibitem[Donoho and Johnstone, 1994]{DJ94}
Donoho, D. and Johnstone, I. (1994).
\newblock Ideal spatial adaptation by wavelet shrinkage.
\newblock {\em Biometrika}, 81:425--455.

\bibitem[Fryzlewicz, 2008]{fryzlewicz}
Fryzlewicz, P. (2008).
\newblock Data-driven wavelet-{F}isz methodology for nonparametric function
  estimation.
\newblock {\em Electron. J. Stat.}, 2:863--896.

\bibitem[Fryzlewicz and Nason, 2004]{fryznason}
Fryzlewicz, P. and Nason, G.~P. (2004).
\newblock A {H}aar-{F}isz algorithm for {P}oisson intensity estimation.
\newblock {\em J. Comput. Graph. Statist.}, 13(3):621--638.

\bibitem[Furey, 2012]{F12}
Furey, T.~S. (2012).
\newblock {{C}h{I}{P}-seq and beyond: new and improved methodologies to detect
  and characterize protein-{D}{N}{A} interactions}.
\newblock {\em Nat. Rev. Genet.}, 13(12):840--852.

\bibitem[Ga{\"{\i}}ffas and Guilloux, 2012]{GG}
Ga{\"{\i}}ffas, S. and Guilloux, A. (2012).
\newblock High-dimensional additive hazards models and the {L}asso.
\newblock {\em Electron. J. Stat.}, 6:522--546.

\bibitem[Hansen et~al., 2014]{HRBR}
Hansen, N., Reynaud-Bouret, P., and Rivoirard, V. (2014).
\newblock Lasso and probabilistic inequalities for multivariate point
  processes.
\newblock {\em To appear in Bernoulli}.

\bibitem[Houdr{\'e} et~al., 2008]{HMRB}
Houdr{\'e}, C., Marchal, P., and Reynaud-Bouret, P. (2008).
\newblock Concentration for norms of infinitely divisible vectors with
  independent components.
\newblock {\em Bernoulli}, 14(4):926--948.

\bibitem[Huang and Zhang, 2010]{HZ10}
Huang, J. and Zhang, T. (2010).
\newblock The benefit of group sparsity.
\newblock {\em Ann. Statist.}, 38(4):1978--2004.

\bibitem[Jia et~al., 2013]{JRY}
Jia, J., Rohe, K., and Yu, B. (2013).
\newblock The lasso under {P}oisson-like heteroscedasticity.
\newblock {\em Statist. Sinica}, 23(1):99--118.

\bibitem[Kingman, 1993]{Kingman}
Kingman, J. F.~C. (1993).
\newblock {\em Poisson processes}, volume~3 of {\em Oxford Studies in
  Probability}.
\newblock The Clarendon Press, Oxford University Press, New York.
\newblock Oxford Science Publications.

\bibitem[Kolaczyk, 1999]{kolaczyk}
Kolaczyk, E.~D. (1999).
\newblock Wavelet shrinkage estimation of certain {P}oisson intensity signals
  using corrected thresholds.
\newblock {\em Statist. Sinica}, 9(1):119--135.

\bibitem[Kong and Nan, 2014]{KN13}
Kong, S. and Nan, B. (2014).
\newblock Non-asymptotic oracle inequalities for the high-dimensional cox
  regression via lasso.
\newblock {\em Statistica Sinica}, 24:25Ð42.

\bibitem[Leblanc and Letu\'e, 2006]{LL06}
Leblanc, F. and Letu\'e, F. (2006).
\newblock Maximum likelihood estimation in poisson regression via wavelet model
  selection.
\newblock Technical report, hal-00079298.

\bibitem[Lemler, 2013]{Lemler}
Lemler, S. (2013).
\newblock Oracle inequalities for the lasso in the high-dimensional
  multiplicative {A}alen intensity model.
\newblock {\em Submitted}.

\bibitem[Lounici et~al., 2011]{LPvdGT}
Lounici, K., Pontil, M., van~de Geer, S., and Tsybakov, A.~B. (2011).
\newblock Oracle inequalities and optimal inference under group sparsity.
\newblock {\em Ann. Statist.}, 39(4):2164--2204.

\bibitem[Meier et~al., 2008]{mvdgb}
Meier, L., van~de Geer, S., and B{\"u}hlmann, P. (2008).
\newblock The group {L}asso for logistic regression.
\newblock {\em J. R. Stat. Soc. Ser. B Stat. Methodol.}, 70(1):53--71.

\bibitem[Nardi and Rinaldo, 2008]{NR08}
Nardi, Y. and Rinaldo, A. (2008).
\newblock On the asymptotic properties of the group lasso estimator for linear
  models.
\newblock {\em Electron. J. Stat.}, 2:605--633.

\bibitem[Nason, 1996]{NasonCV}
Nason, G.~P. (1996).
\newblock Wavelet shrinkage using cross-validation.
\newblock {\em J. Roy. Statist. Soc. Ser. B}, 58(2):463--479.

\bibitem[Obozinski et~al., 2011]{OWJ}
Obozinski, G., Wainwright, M.~J., and Jordan, M.~I. (2011).
\newblock Support union recovery in high-dimensional multivariate regression.
\newblock {\em Ann. Statist.}, 39(1):1--47.

\bibitem[Park and Hastie, 2007]{PH07}
Park, M.~Y. and Hastie, T. (2007).
\newblock {$L_1$}-regularization path algorithm for generalized linear models.
\newblock {\em J. R. Stat. Soc. Ser. B Stat. Methodol.}, 69(4):659--677.

\bibitem[Picard et~al., 2014]{PCA14}
Picard, F., Cadoret, J.~C., Audit, B., Arneodo, A., Alberti, A., Battail, C.,
  Duret, L., and Prioleau, M.~N. (2014).
\newblock {{T}he spatiotemporal program of {D}{N}{A} replication is associated
  with specific combinations of chromatin marks in human cells}.
\newblock {\em PLoS Genet.}, 10(5):e1004282.

\bibitem[Reynaud-Bouret, 2003]{ptrf}
Reynaud-Bouret, P. (2003).
\newblock Adaptive estimation of the intensity of inhomogeneous {P}oisson
  processes via concentration inequalities.
\newblock {\em Probab. Theory Related Fields}, 126(1):103--153.

\bibitem[Sardy et~al., 2004]{sardy}
Sardy, S., Antoniadis, A., and Tseng, P. (2004).
\newblock Automatic smoothing with wavelets for a wide class of distributions.
\newblock {\em J. Comput. Graph. Statist.}, 13(2):399--421.

\bibitem[Thurman et~al., 2012]{TRH12}
Thurman, R.~E., Rynes, E., Humbert, R., Vierstra, J., Maurano, M.~T., Haugen,
  E., Sheffield, N.~C., Stergachis, A.~B., Wang, H., Vernot, B., Garg, K.,
  John, S., Sandstrom, R., Bates, D., Boatman, L., Canfield, T.~K., Diegel, M.,
  Dunn, D., Ebersol, A.~K., Frum, T., Giste, E., Johnson, A.~K., Johnson,
  E.~M., Kutyavin, T., Lajoie, B., Lee, B.~K., Lee, K., London, D., Lotakis,
  D., Neph, S., Neri, F., Nguyen, E.~D., Qu, H., Reynolds, A.~P., Roach, V.,
  Safi, A., Sanchez, M.~E., Sanyal, A., Shafer, A., Simon, J.~M., Song, L.,
  Vong, S., Weaver, M., Yan, Y., Zhang, Z., Zhang, Z., Lenhard, B., Tewari, M.,
  Dorschner, M.~O., Hansen, R.~S., Navas, P.~A., Stamatoyannopoulos, G., Iyer,
  V.~R., Lieb, J.~D., Sunyaev, S.~R., Akey, J.~M., Sabo, P.~J., Kaul, R.,
  Furey, T.~S., Dekker, J., Crawford, G.~E., and Stamatoyannopoulos, J.~A.
  (2012).
\newblock {{T}he accessible chromatin landscape of the human genome}.
\newblock {\em Nature}, 489(7414):75--82.

\bibitem[Tropp, 2004]{Tropp04}
Tropp, J.~A. (2004).
\newblock Greed is good: algorithmic results for sparse approximation.
\newblock {\em IEEE Trans. Inform. Theory}, 50(10):2231--2242.

\bibitem[van~de Geer, 2008]{vdG}
van~de Geer, S.~A. (2008).
\newblock High-dimensional generalized linear models and the lasso.
\newblock {\em Ann. Statist.}, 36(2):614--645.

\bibitem[van~de Geer and B{\"u}hlmann, 2009]{vdGB}
van~de Geer, S.~A. and B{\"u}hlmann, P. (2009).
\newblock On the conditions used to prove oracle results for the {L}asso.
\newblock {\em Electron. J. Stat.}, 3:1360--1392.

\bibitem[Wei and Huang, 2010]{WH10}
Wei, F. and Huang, J. (2010).
\newblock Consistent group selection in high-dimensional linear regression.
\newblock {\em Bernoulli}, 16(4):1369--1384.

\bibitem[Yuan and Lin, 2006]{YL}
Yuan, M. and Lin, Y. (2006).
\newblock Model selection and estimation in regression with grouped variables.
\newblock {\em J. R. Stat. Soc. Ser. B Stat. Methodol.}, 68(1):49--67.

\bibitem[Zou, 2006]{Zou}
Zou, H. (2006).
\newblock The adaptive lasso and its oracle properties.
\newblock {\em J. Amer. Statist. Assoc.}, 101(476):1418--1429.

\bibitem[Zou, 2008]{Zou2}
Zou, H. (2008).
\newblock A note on path-based variable selection in the penalized proportional
  hazards model.
\newblock {\em Biometrika}, 95(1):241--247.

\end{thebibliography}

\end{document}